\newcommand{\argmin}{\mathrm{argmin}}
\newcommand{\dunion}{\ \dot\cup\ }
\newtheorem{theorem}{Theorem}
\newtheorem{lemma}[theorem]{Lemma}
\newtheorem{corollary}[theorem]{Corollary}
\newtheorem{definition}[theorem]{Definition}
\newcommand{\QQ}{\ensuremath{\mathbb{Q}}}
\newcommand{\NN}{\ensuremath{\mathbb{N}}}
\newcommand{\ZZ}{\ensuremath{\mathbb{Z}}}
\newcommand{\PT}{\ensuremath{\mathrm{P}}\xspace}
\newcommand{\NP}{\ensuremath{\mathrm{NP}}\xspace}
\newcommand{\XOR}{\ensuremath{\mathrm{XOR}}\xspace}
\newcommand{\infC}{\ensuremath{C^{\omega}}\xspace}
\newcommand{\VCSP}{\ensuremath{\mathrm{VCSP}\xspace}}
\newcommand{\CSP}{\ensuremath{\mathrm{CSP}\xspace}}
\newcommand{\BLP}{\ensuremath{\mathrm{BLP}\xspace}}
\newcommand{\LP}{\ensuremath{\mathrm{LP}\xspace}}
\newcommand{\BIT}{\ensuremath{\mathrm{BIT}\xspace}}
\newcommand{\MULT}{\ensuremath{\mathrm{MULT}\xspace}}
\newcommand{\MAXCUT}{\ensuremath{\mathrm{MAXCUT}\xspace}}
\newcommand{\SAT}{\ensuremath{\mathrm{SAT}\xspace}}
\newcommand{\NAE}{\ensuremath{\mathrm{NAE}\xspace}}
\newcommand{\NAESAT}{\ensuremath{\mathrm{NAESAT}\xspace}}
\newcommand{\logic}[1]{\text{\upshape #1}\xspace}
\newcommand{\FPC}{\logic{FPC}}
\newcommand{\FO}{\logic{FO}}
\newcommand{\Dl}{\logic{Datalog}}
\newcommand{\ar}{\ensuremath{\mathrm{ar}}}
\newcommand{\expGamma}{\langle\Gamma\rangle}
\newcommand{\ra}{\rightarrow}
\newcommand{\logicoperator}[1]{\mathbf{#1}}
\newcommand{\ifpop}{\logicoperator{ifp}}
\renewcommand{\vec}[1]{\bar{#1}}
\newcommand{\tup}[1]{\vec{#1}}
\newcommand{\countingTerm}[1]{\#_{#1}}
\newcommand{\struct}[1]{\ensuremath{\mathbf #1}\xspace}
\newcommand{\univ}[1]{\ensuremath{\dom(\struct #1)}} 
\newcommand{\dom}{\mathrm{dom}}
\newcommand{\vocvec}{\tau_{\text{vec}}}
\newcommand{\vocmat}{\tau_{\text{mat}}}
\newcommand{\defeq}{:=}
\newcommand{\var}{\mathit{var}}
\newcommand{\con}{\mathit{con}}
\newcommand{\dm}{\mathit{dom}}
\newcommand{\Def}{\mathit{Def}}
\newcommand{\Enc}{\mathit{Enc}}
\begin{document}

\title{A Definability Dichotomy for Finite Valued CSPs}
\author{Anuj Dawar and Pengming Wang}
\affil{University of Cambridge Computer Laboratory\\
	\texttt{\{anuj.dawar, pengming.wang\}@cl.cam.ac.uk}}

\maketitle

\begin{abstract}
Finite valued constraint satisfaction problems are a formalism for
describing many natural optimization problems, where constraints on
the values that variables can take come with rational weights and the
aim is to find an assignment of minimal cost.  Thapper and \v{Z}ivn\'y
have recently established a complexity dichotomy for finite valued constraint
languages.  They show that each such language either gives rise to a
polynomial-time solvable optimization problem, or to an $\NP$-hard
one, and establish a criterion to distinguish the two cases.  We
refine the dichotomy by showing that all optimization problems in the
first class are definable in fixed-point language with counting, while
all languages in the second class are not definable, even in
infinitary logic with counting.  Our definability dichotomy is not
conditional on any complexity-theoretic assumption.
\end{abstract}
\section{Introduction} \label{sec:intro}

Constraint Satisfaction Problems (CSPs) are a widely-used formalism
for describing many problems in optimization, artificial intelligence
and many other areas.  The classification of CSPs according to their
tractability has been a major area of theoretical research ever since
Feder and Vardi~\cite{FV98} formulated their dichotomy conjecture.
The main aim is to classify various constraint satisfaction problems
as either tractable (i.e.\ decidable in polynomial time) or $\NP$-hard
and a number of dichotomies have been established for special cases of
the CSP as well as generalizations of it.  In particular, Cohen et al.~\cite{ccjk06:ai} extend the algebraic methods that have been very successful in the classification of CSPs to what they call \emph{soft constraints}, that is constraint problems involving optimization rather than decision problems.  In this context, a  recent 
result by Thapper and  \v{Z}ivn\'y~\cite{tz13:stoc} established a
complexity dichotomy for \emph{finite valued CSPs} (VCSPs).  This is a
formalism for defining optimization problems that can be expressed as
sums of explicitly given rational-valued functions (a more formal
definition is given in Section~\ref{sec:background}).  As Thapper
and  \v{Z}ivn\'y argue, the formalism is general enough to include a
wide variety of natural optimization problems.  They show that every
finite valued CSP is either in $\PT$ or $\NP$-hard and provide a
criterion, in terms of the existence of a definable $\XOR$ function,
that determines which of the two cases holds.

In this paper we are interested in the \emph{definability} of
constraint satisfaction problems in a suitable logic.  Definability in
logic has been a significant tool for the study of CSPs for many
years. A particular logic that has received attention in this
context is $\Dl$, the language of inductive definitions by
function-free Horn clauses.  A dichotomy of definability has been
established in the literature, which shows that every constraint
satisfaction problem on a fixed template is either definable in $\Dl$
or it is not definable even in the much stronger $\infC$---an
infinitary logic with counting.  This result has not been published as
such but is an immediate consequence of results in~\cite{abd09:tcs}
where it is shown that every CSP satisfying a certain algebraic
condition is not definable in $\infC$, and in~\cite{BartoK14} where it
is shown that those that fail to satisfy this condition have bounded
width and are therefore definable in $\Dl$.  The definability
dichotomy so established does not line up with the (conjectured)
complexity dichotomy as it is known that there are tractable CSPs that
are not definable in $\Dl$.

In the context of the definability of optimization problems, one needs to distinguish three kinds of definability.  In general an optimization problem asks for a \emph{solution} (which will typically be an assignment of values from some domain $D$ to the variables $V$ of the instance) minimising the value of a \emph{cost} function.  This problem is standardly turned into a decision problem by including a budget $b$ in the instance and asking if there is a solution that achieves a cost of at most $b$.  Sentences in a logic naturally define decision problems, and in the context of definability a natural question is whether the decision problem is definable.  Asking for a formula that defines an actual optimal solution may not be reasonable as such a solution may not be uniquely determined by the instance and formulas in logic are generally invariant under automorphisms of the structure on which they are interpreted.  
An intermediate approach is to ask for a term in the logic that defines the \emph{cost} of an optimal solution and this is our approach in this paper.

Our main result is a definability dichotomy for finite valued
CSPs.  In the context of optimization problems involving numerical
values, $\Dl$ is unsuitable so we adopt as our yardstick definability
in fixed-point logic with counting ($\FPC$).  This is an important logic
that defines a natural and powerful proper fragment of the
polynomial-time decidable properties (see~\cite{Daw15}).  It should be
noted that $\infC$ properly extends the expressive power of $\FPC$ and
therefore undefinability results for the former yield undefinability
results for the latter.  We establish that every finite valued CSP is
either definable in $\FPC$ or undefinable in $\infC$.  Moreover, this
dichotomy lines up exactly with the complexity dichotomy of Thapper
and \v{Z}ivn\'y.  All the valued CSPs they determine are tractable are
in fact definable in $\FPC$, and all the ones that are $\NP$-hard are
provably not in  $\infC$.  It should be emphasised that, unlike the
complexity dichotomy, our definability dichotomy is not conditional on
any complexity-theoretic assumption.  Even if it were the case that
$\PT = \NP$, the finite valued CSPs still divide into those definable
in $\FPC$ and those that are not on these same lines.

The positive direction of our result builds on the recent work of
Anderson et al.~\cite{adh13:lics} showing that solutions to explicitly
given instances of linear programming are definable in $\FPC$. 
Thapper and  \v{Z}ivn\'y show that for the tractable VCSPs the optimal
solution can be found by solving their basic linear programming (BLP)
relaxation.  Thus, to establish the definability of these problems in
$\FPC$ it suffices to show that the reduction to the BLP is itself
definable in $\FPC$, which we do in Section~\ref{s:exp}.

For the negative direction, we use the reductions used
in~\cite{tz13:stoc} to establish $\NP$-hardness of VCSPs and show that
these reductions can be carried out within $\FPC$.  We start with the
standard CSP form of 3-SAT, which is not definable in $\infC$ as a
consequence of results from~\cite{abd09:tcs}.  Details of all these
reductions are presented in Section~\ref{s:inexp}.

There is one issue with regard to the representation of instances of VCSPs as relational structures which we need to consider in the context of definability.  An instance is defined over a language which consists of a set $\Gamma$ of functions from a finite domain $D$ to the rationals.  If $\Gamma$ is a finite set, it is reasonable to fix the relational signature to have a relation for each function in $\Gamma$, and the $\FPC$ formula defining the class of VCSPs would be in this fixed relational signature.   Indeed, the result of Thapper and \v{Z}ivn\'y~\cite{tz13:stoc} is stated for infinite sets $\Gamma$ but is really about finite subsets of it.  That is, they show that if $\Gamma$ does not have the $\XOR$ property, then \emph{every} finite subset of $\Gamma$ determines a tractable VCSP and that if $\Gamma$ does have the $\XOR$ property then it contains a finite subset $\Gamma'$ such that $\VCSP(\Gamma')$ is $\NP$-hard.  Our definability dichotomy replicates this precisely.
However, we can also consider the \emph{uniform} definability of $\VCSP(\Gamma)$ when $\Gamma$ is infinite (note that only finitely many functions from the language $\Gamma$ are used in constraints in any instance).  A natural way to represent this is to allow the functions themselves to be elements of the relational structure coding an instance.  We can show that our dichotomy holds even under this uniform representation.  For simplicity of exposition, we present the results for finite $\Gamma$ and then, in Section~\ref{s:inf}, we explain how the proof can be modified to the uniform case where the functions are explicitly given as elements of the structure. 

\section{Background}\label{sec:background}

	\textbf{Notation.}  We write $\NN$ for the natural numbers, $\ZZ$ for the integers, $\QQ$ for the rational numbers and $\QQ^+$ to denote the positive rationals.

We use bars $\bar{v}$ to denote vectors.  A vector over a set $A$ indexed by a set $I$ is a function $\bar{v}:I\rightarrow A$.  We write  $v_a$ for $\bar{v}(a)$.  Often, but not always, the index set $I$ is $\{1,\dots,d\}$, an initial segment of the natural numbers.  In this case, we also write $|\vec{v}|$ for the length of $\vec{v}$, i.e.\ $d$.
A matrix $M$ over $A$ indexed by two sets $I,J$ is a function 
	$M:I\times J\rightarrow A$.  We use the symbol $\dunion$ for the disjoint
	union operator on sets.

If $\bar{v}$ is an $I$-indexed vector over $A$ and $f: A \ra B$ is a function, we write $f(\bar{v})$ to denote the $I$-indexed vector over $B$ obtained by applying $f$ componentwise to $\bar{v}$.
	
	\subsection{Valued Constraint Satisfaction}
We begin with the basic definitions of valued constraint satisfaction problems.  These definitions are based, with minor modifications, on the definitions given in~\cite{tz13:stoc}.

	\begin{definition}
		Let $D$ be a finite domain. A \emph{valued constraint language} $\Gamma$
		over $D$ is a set of functions, where each $f \in \Gamma$ has an associated arity $m = \ar(f)$  and $f:D^{m} \rightarrow \mathbb{Q}^+$.
	\end{definition}
	
	\begin{definition}
		An instance of the \emph{valued constraint satisfaction problem (\VCSP)}
		over a valued constraint language $\Gamma$ is a pair $I=(V,C)$,
		where $V$ is a finite set of \emph{variables} and $C$ is a finite set
		of \emph{constraints}. Each constraint
		in $C$ is a triple $(\sigma, f, q)$, where 
		$f\in \Gamma$, $\sigma \in V^{\ar(f)}$ and $q \in \QQ$.
		
		A \emph{solution} to an instance $I$ of $\VCSP(\Gamma)$ is an assignment 
		$h:V\rightarrow D$ of values in $D$ to the variables in $V$.  The \emph{cost} of the solution $h$ is given by
		$cost_I(h):=\sum_{(\sigma, f, q)\in C} q\cdot f(h(\sigma))$. 
 The valued constraint satisfaction problem is then to 
		find a solution with minimal cost.
		
		In the \emph{decision version} of the problem, an additional threshold
		constant $t\in\mathbb{Q}$ is given, and the question becomes whether there
		is a solution $h$ with $cost_I(h)\leq t$.
	\end{definition}
	
Given a valued constraint language $\Gamma$, there are certain natural closures $\Gamma'$ of this set of functions for which the computational complexity of $\VCSP(\Gamma)$ and $\VCSP(\Gamma')$ coincide.  The first we consider is called the \emph{expressive power} of $\Gamma$, which consists of functions that can be defined by minimising a cost function over a fixed $\VCSP(\Gamma)$-instance $I$ over some projection of the variables in $I$ (this is defined formally below).  The second closure of $\Gamma$ we consider is under scaling and translation.  Both of these are given formally in the following definition.
	\begin{definition}
\label{def:expower}
		Let $\Gamma$ be a valued constraint language over $D$. We say a function
		$f:D^m\rightarrow\mathbb{Q}$, is 
		\emph{expressible} in $\Gamma$, if there is some instance 
		$I_f=(V_f,C_f)\in \VCSP(\Gamma)$ and a tuple $\tup{v}=(v_1,\ldots,v_m)\in V^m_f$ such that
		\[f(\bar{x}) = \min_{h\in H_{\bar{x}}}
			 cost_{I_f}(h),\]
		where
		$H_{\bar{x}}:=\{h:V_f\rightarrow D \mid h(v_i)=x_i \ ,\ 1\leq i\leq m\}$.
		We then say the function $f$ is \emph{expressed} by the instance $I_f$ and
		the tuple $\tup v$, and 
		call the set of all functions that can be expressed by an instance of 
		$\VCSP(\Gamma)$ the
		\emph{expressive power} of $\Gamma$, denoted by $\langle\Gamma\rangle$.
		
		Furthermore, we write $f'\equiv f$ if $f'$ is obtained from $f$ by
		\emph{scaling} and \emph{translation}, i.e.\ there are $a,b\in\QQ, a>0$
		such that $f'=a\cdot f+b$. For a valued constraint language $\Gamma$, we write
		$\Gamma_{\equiv}$ to denote the set $\{f' \mid f'\equiv f\ \mbox{for some}\ f\in\Gamma\}$.
	\end{definition}

The next two lemmas establish that closing $\Gamma$ under these
operations does not change the complexity of the corresponding
problem.  The first of these is implicit in the literature, and we
prove a stronger version of it in Lemma~\ref{lem:equiv_to_gamma}.
	
	\begin{lemma}\label{lem:equiv}
		Let $\Gamma$ and $\Gamma'$ be valued constraint languages on domain $D$ such that
		$\Gamma' \subseteq \Gamma_\equiv$.  
		Then $\VCSP(\Gamma')$ is polynomial-time reducible to $\VCSP(\Gamma)$.
	\end{lemma}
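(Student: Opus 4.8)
The plan is to exhibit a direct, polynomial-time many-one reduction from instances of $\VCSP(\Gamma')$ to instances of $\VCSP(\Gamma)$ that preserves the ordering of solution costs, so that optimal solutions (and the answer to the decision version) are preserved. The key observation is that $\Gamma' \subseteq \Gamma_\equiv$ means every $f' \in \Gamma'$ is of the form $f' = a_{f'} \cdot f + b_{f'}$ for some $f \in \Gamma$ and rationals $a_{f'} > 0$ and $b_{f'}$; since $\Gamma'$ is finite (or at least each instance mentions only finitely many functions) these scaling and translation constants are available to the reduction.

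First I would define the reduction on an instance $I' = (V, C')$ of $\VCSP(\Gamma')$: keep the variable set $V$ unchanged, and replace each constraint $(\sigma, f', q) \in C'$ by the constraint $(\sigma, f, a_{f'} \cdot q)$, where $f \in \Gamma$ and $a_{f'}, b_{f'}$ witness $f' \equiv f$. This yields an instance $I = (V, C)$ of $\VCSP(\Gamma)$. Next I would compute how the cost transforms under this map. For any assignment $h : V \to D$ we have
\[
cost_{I'}(h) = \sum_{(\sigma, f', q) \in C'} q \cdot f'(h(\sigma)) = \sum_{(\sigma, f', q) \in C'} q \cdot \bigl(a_{f'} f(h(\sigma)) + b_{f'}\bigr) = cost_I(h) + B,
\]
where $B := \sum_{(\sigma, f', q) \in C'} q \cdot b_{f'}$ is a constant independent of $h$. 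Thus $cost_{I'}$ and $cost_I$ differ by a fixed additive constant $B$ that can be computed in polynomial time from the instance.

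Since the cost functions differ only by the additive constant $B$, an assignment $h$ is optimal for $I'$ exactly when it is optimal for $I$, establishing that the reductions respect the optimization version. For the decision version, I would map a threshold $t$ for $I'$ to the threshold $t - B$ for $I$: there is a solution with $cost_{I'}(h) \leq t$ if and only if there is one with $cost_I(h) \leq t - B$. Hence the reduction is correct and clearly computable in polynomial time, since it only rewrites each constraint locally and accumulates one rational sum.

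I expect the only mild subtlety, rather than a genuine obstacle, to be bookkeeping around the requirement $q \in \QQ$ in constraints versus the positivity requirement $a_{f'} > 0$ in the definition of $\equiv$: positivity of the scaling factor is what guarantees the transformation is order-preserving rather than order-reversing on costs, and it is crucial that it is the scaling constants (not the weights $q$, which may be arbitrary rationals) that are constrained to be positive. The additive term $b_{f'}$ may be negative, but this only affects the constant $B$ and is harmless. No closure under expressibility is needed here, so this argument is self-contained and does not invoke the machinery of $\expGamma$.
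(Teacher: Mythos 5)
Your reduction is correct and is essentially the construction the paper uses: it proves this lemma as a special case of the stronger Lemma~\ref{lem:equiv_to_gamma}, which likewise keeps the variable set, replaces each function by its $\equiv$-representative in $\Gamma$, rescales the weight by the (positive) scaling factor, and absorbs the translations into an additive constant on the cost. Your explicit bookkeeping of the constant $B$ and the shifted threshold $t-B$ is a welcome touch (the paper only remarks that additive constants do not affect which solutions are optimal), and your orientation of the scaling factor ($f'=a_{f'}f+b_{f'}$, new weight $a_{f'}q$) is the consistent one.
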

	
	\begin{lemma}[Theorem 3.4, \cite{ccjk06:ai}]\label{lem:expower}
		Let $\Gamma$ and $\Gamma'$ be valued constraint languages on domain $D$ such that
		$\Gamma' \subseteq \langle\Gamma\rangle$.  
		Then $\VCSP(\Gamma')$ is polynomial-time reducible to $\VCSP(\Gamma)$.
	\end{lemma}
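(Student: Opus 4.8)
The plan is to prove this by a standard gadget-substitution argument, replacing each constraint of a $\VCSP(\Gamma')$-instance by a fresh copy of the instance that witnesses expressibility. Since $\Gamma' \subseteq \langle\Gamma\rangle$, for each $f \in \Gamma'$ we may fix once and for all an instance $I_f = (V_f, C_f) \in \VCSP(\Gamma)$ and a tuple $\bar v = (v_1, \dots, v_m) \in V_f^m$ with $m = \ar(f)$ expressing $f$, so that $f(\bar x) = \min_{h \in H_{\bar x}} cost_{I_f}(h)$. As $\Gamma$ and $\Gamma'$ are fixed languages, these witnesses are constants of the reduction.

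Given an instance $I = (V, C)$ of $\VCSP(\Gamma')$, I would construct an instance $I' = (V', C')$ of $\VCSP(\Gamma)$ as follows. For each constraint $c = (\sigma, f, q) \in C$ with $\sigma = (s_1, \dots, s_m)$, introduce a disjoint fresh copy of the internal variables $V_f \setminus \{v_1, \dots, v_m\}$, and let $\pi_c \colon V_f \to V'$ be the renaming sending each $v_i$ to $s_i$ and each internal variable to its own fresh copy. The set $V'$ consists of $V$ together with all the fresh copies, and $C'$ consists, for every $c$, of the constraints $(\pi_c(\tau), g, q\cdot p)$ for each $(\tau, g, p) \in C_f$. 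Each gadget has size bounded by a constant depending only on $f$, and there are $|C|$ of them, so $I'$ is computable in time linear in $|I|$ and the reduction is polynomial.

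For correctness, fix any assignment $h \colon V \to D$. Because the fresh variables of distinct gadgets are pairwise disjoint and share nothing outside $V$, the minimum of $cost_{I'}$ over all extensions of $h$ to $V'$ decomposes as a sum of independent per-gadget minimizations, and the gadget for $c = (\sigma, f, q)$ contributes $\min(q\cdot cost_{I_f}) = q\cdot f(h(\sigma))$ by the choice of $I_f$ and $\bar v$. Summing, the cheapest extension of $h$ has cost exactly $cost_I(h)$, and minimizing over $h$ shows that $I$ and $I'$ have the same optimal cost; in the decision version the same threshold $t$ is carried over. The one point requiring care --- and the main obstacle --- is precisely this decomposition of the global minimum into independent gadget minima, which hinges on the disjointness of the auxiliary variables; here one must also verify that scaling the gadget weights by $q$ genuinely pulls $q$ through the minimum (taking weights positive, as in the standard setting, so that a $\min$ is preserved rather than turned into a $\max$), after which the equality $\min(q\cdot cost_{I_f}) = q\cdot f$ is immediate from the definition of expressive power.
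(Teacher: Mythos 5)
Your gadget-substitution argument is exactly the construction the paper uses: it cites this lemma from Cohen et al.\ and then reproduces the same proof idea in Lemma~\ref{lem:expower_to_gamma}, replacing each constraint $(\sigma,f,q)$ by a fresh copy of the witnessing instance $I_f$ with its distinguished tuple identified with $\sigma$ and all weights scaled by $q$, and arguing correctness via the same decomposition of the global minimum into independent per-gadget minima. Your added caveat about pulling $q$ through the $\min$ (requiring nonnegative weights) is a reasonable point of care but does not change the fact that the approach coincides with the paper's.
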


In the study of constraint satisfaction problems, and of structure
homomorphisms more generally the \emph{core} of a structure plays an
important role.  The corresponding notion for valued constraint
languages is given in the following definition.
	\begin{definition}
		We call a valued constraint language $\Gamma$ over domain $D$
		a \emph{core} if for for all $a\in D$, there is some instance 
		$I_a\in \VCSP(\Gamma)$ such that in every minimal cost solution over $I_a$, some variable
		is assigned $a$.
		A valued constraint language $\Gamma'$ over a domain $D'\subseteq D$ 
		is a \emph{sub-language} of $\Gamma$ if it contains exactly the functions
		of $\Gamma$ restricted to $D'$. We say $\Gamma'$ is a \emph{core of} $\Gamma$,
		if $\Gamma'$ is a sub-language of $\Gamma$ and also a core.
	\end{definition}
	
	\begin{lemma}[Lemma 2.4, \cite{tz13:stoc}]\label{lem:core_equiv}
		Let $\Gamma'$ be a core of $\Gamma$. Then, 
		$\min_h cost_I(h)=\min_h cost_{I'}(h)$ for all $I\in \VCSP(\Gamma)$ and 
		$I'\in \VCSP(\Gamma')$ where $I'$ is obtained from $I$
		by replacing each function of $\Gamma$ by its restriction in $\Gamma'$. 
	\end{lemma}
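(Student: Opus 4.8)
The plan is to prove the two inequalities $\min_h cost_I(h) \le \min_h cost_{I'}(h)$ and $\min_h cost_I(h) \ge \min_h cost_{I'}(h)$ separately. The first is immediate: since $D' \subseteq D$, every assignment $h : V \to D'$ is also an assignment into $D$, and because each function of $\Gamma'$ is the restriction to $D'$ of the corresponding function of $\Gamma$, we have $cost_{I'}(h) = cost_I(h)$ for every such $h$. Hence the set of costs realised by solutions of $I'$ is contained in the set realised by solutions of $I$, and the minimum over the larger set can only be smaller. All the work therefore lies in the reverse inequality, that is, in exhibiting a minimal-cost solution of $I$ that takes values only in $D'$.

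To produce such a solution I would use the fact that the core $\Gamma'$, i.e.\ the restriction of $\Gamma$ to $D'$, comes with a cost-non-increasing retraction of $D$ onto $D'$. Concretely, I would introduce a \emph{unary fractional polymorphism} of $\Gamma$: a probability distribution $\omega$ over maps $g : D \to D$ such that for every $f \in \Gamma$ of arity $m$ and every $\bar x \in D^m$ one has $\sum_g \omega(g)\, f(g(\bar x)) \le f(\bar x)$, where $g$ is applied componentwise. The passage from $\Gamma$ to its core singles out $D'$ precisely as the image of such an $\omega$ of least cardinality, so I may assume $g(D) \subseteq D'$ for every $g$ in the support of $\omega$; the forcing condition in the definition of a core is exactly the guarantee that this image cannot be shrunk further, i.e.\ that $\Gamma'$ admits no non-surjective fractional endomorphism.

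Given this $\omega$, the reverse inequality follows by an averaging argument. Let $h : V \to D$ be a minimal-cost solution of $I$. For each constraint $(\sigma, f, q)$ the defining inequality of $\omega$, applied to $\bar x = h(\sigma)$, gives $\sum_g \omega(g)\, f(g(h(\sigma))) \le f(h(\sigma))$; multiplying by the weight $q$ and summing over all constraints yields $\sum_g \omega(g)\, cost_I(g \circ h) \le cost_I(h)$. Since the left-hand side is a convex combination, some $g$ in the support of $\omega$ satisfies $cost_I(g \circ h) \le cost_I(h)$. But $g \circ h$ takes values in $D'$, so it is a solution of $I'$ with $cost_{I'}(g \circ h) = cost_I(g \circ h) \le \min_h cost_I(h)$, giving $\min_h cost_{I'}(h) \le \min_h cost_I(h)$, as required.

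I expect the main obstacle to be the step taken for granted above, namely producing the cost-non-increasing fractional endomorphism with image $D'$ from the hypothesis that $\Gamma'$ is a core. If one adopts the construction of the core as the image of a minimal fractional endomorphism, this is immediate and the averaging argument finishes the proof; working purely from the forcing definition one must instead show, by an LP-duality (Farkas) argument, that the absence of such a retraction onto a proper subset would furnish an instance forcing an element outside $D'$, contradicting that $D'$ carries the core. A secondary point requiring care is that the summation step uses non-negativity of the weights $q$, so that the per-constraint inequalities combine in the correct direction.
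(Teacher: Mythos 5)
The paper does not prove this lemma at all: it is imported verbatim from Thapper and \v{Z}ivn\'y (Lemma~2.4 of \cite{tz13:stoc}), so there is no in-paper argument to compare against. Your proposal reconstructs essentially the proof from that source: the easy inequality $\min_h cost_I(h)\le\min_h cost_{I'}(h)$ by inclusion of solution sets, and the reverse inequality by averaging over a unary fractional polymorphism whose support maps $D$ into $D'$. Both halves are sound as written (and you are right that the averaging step needs the constraint weights $q$ to be non-negative; the paper's definition literally allows $q\in\QQ$, which is almost certainly an oversight inherited from nowhere --- \cite{tz13:stoc} takes $q\ge 0$).

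The obstacle you flag at the end is, however, not just a technical loose end: relative to the definition of ``core of $\Gamma$'' actually given in this paper, it cannot be closed, because that definition only requires $\Gamma'$ to be a sub-language over some $D'\subseteq D$ that is itself a core in the forcing sense. This does not entail the existence of a cost-non-increasing retraction of $D$ onto $D'$. Concretely, take $D=\{0,1\}$, $\Gamma=\{f\}$ with $f$ unary, $f(0)=0$, $f(1)=1$, and $D'=\{1\}$: the restriction $\Gamma'$ is vacuously a core (every solution of every $\Gamma'$-instance assigns $1$ somewhere), yet the one-variable instance with the single constraint $((x),f,1)$ has optimum $0$ over $D$ and $1$ over $D'$, so the lemma fails. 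The statement is true under the definition used in \cite{tz13:stoc}, where ``$\Gamma'$ is a core of $\Gamma$'' additionally requires $D'=g(D)$ for some $g$ in the support of a unary fractional polymorphism of $\Gamma$ --- which is precisely the object your averaging argument consumes. So your proof is correct for the intended (cited) notion of core, and your instinct that the forcing characterisation alone does not hand you the retraction is accurate; the forcing condition certifies that $D'$ cannot be shrunk further, not that $D$ retracts onto it. If you want a self-contained proof from the paper's hypotheses you would first have to repair the definition, or prove the equivalence of the two notions via the LP-duality argument you sketch.
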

	
Finally, we consider the closure of $\Gamma$ under parameterized
definitions.  That is, we define $\Gamma_c$, the language obtained
from $\Gamma$ by allowing functions that are obtained from those in
$\Gamma$ by fixing some parameters.
	\begin{definition}\label{def:gamma_c}
		Let $\Gamma$ be a core over $D$, we denote by $\Gamma_c$
		the language that contains exactly those functions 
		$f: D^m \ra \QQ$ for which there exists
		\begin{itemize}
		  \item a function $g\in\Gamma$, with $g: D^n \ra \QQ$
                  with $n\geq m$,
		  \item an injective mapping $s_{f}:\{1,\ldots,m\}\rightarrow\{1,\ldots,n\}$,
		  \item an index set $T_{f}\subseteq\{1,\ldots,n\}$,
		  \item and a partial assignment $t_{f}:T_{f}\rightarrow D$,
		\end{itemize}
		such that $f$ is $g$ restricted on $t_{f}$, i.e.\ 
		$f(x_{s_{f}(1)},\ldots,x_{s_{f}(m)})=f(t(x_1),\ldots,t(x_n))$,
		where $t(x_i)=t_{f}(i)$ if $i\in T_{f}$, and $t(x_i)=x_i$ otherwise.
		Furthermore, we fix a mapping $\gamma:\Gamma_c\rightarrow \Gamma$ that
		assigns each $f\in\Gamma_c$ a function $g=\gamma(f)\in\Gamma$ with the above properties.
	\end{definition}
 For example, if $f(x_1,x_2,x_3)\in\Gamma$, 
		then $g(x_1,x_2):=f(x_1,a,x_2)$ for $a\in D$ is in $\Gamma_c$.

	\subsection{Linear Programming}
	
	\begin{definition}\label{def:lin-opt}
		Let $\mathbb{Q}^V$ be the rational Euclidean space indexed by a set $V$.
		A \emph{linear optimization problem} is given by a \emph{constraint matrix}
		$A\in\mathbb{Q}^{C\times V}$ and vectors $\bar{b}\in\mathbb{Q}^C, \bar{c}\in\mathbb{Q}^V$.
		Let $P_{A,\bar{b}}:=\{\bar{x}\in\mathbb{Q}^V | A\bar{x}\leq \bar{b}\}$ be the set of 
		\emph{feasible solutions}. The linear optimization problem is then to 
		determine either that $P_{A,\bar{b}}=\emptyset$, or to find a vector 
		$\bar{y}=\mathrm{argmax}_{\bar{x}\in P_{A,\bar{b}}} \bar{c}^T\bar{x}$, or to determine that
		$\max_{\bar{x}\in P_{A,\bar{b}}} \bar{c}^T\bar{x}$ is unbounded.
		
		We speak of the \emph{integer linear optimization problem}, if the set
		of feasible solutions is instead defined as 
		$P_{A,\bar{b}}:=\{\bar{x}\in\mathbb{Z}^V | A\bar{x}\leq \bar{b}\}$.
		
		In the decision version of the problem, an additional constant $t\in\mathbb{Q}$
		is given, and the task is determine whether there exists a feasible solution
		$\bar{x}\in P_{A,\bar{b}}$, such that $\bar{c}^T\bar{x}\geq t$.

	\end{definition}
	It is often convenient to describe the linear optimization problem  $(A,\bar{b},\bar{c})$ as a system of linear inequalities $A\bar{x}\leq\bar{b}$
	along with the objective $\max_{\bar{x}\in P_{A,\bar{b}}} \bar{c}^T\bar{x}$.  We may also alternatively, describe an instance with a minimization objective.  It is easy to see that such a system can be converted to the standard form of Defintion~\ref{def:lin-opt}.
	
	Let $\Gamma$ now be a valued constraint language over $D$, and let $I=(V,C)$ be
	an instance of $\VCSP(\Gamma)$. We associate with $I$ the following linear
	optimization problem in variables $\lambda_{c,\nu}$ for each $c \in C$ with $c=(\sigma,f,q)$ and $\nu \in D^{\ar(f)}$, and $\mu_{x,a}$ for each $x\in V$ and $a \in D$.
	\begin{equation}
		\min \sum_{c\in C}  \sum_{\nu\in D^{\ar(f)}} \!\! \lambda_{c,\nu} \cdot q \cdot f(\nu) \quad \text{where } c = (\sigma,f,q) 
  \end{equation}
subject to the following constraints.\\
For each $c\in C$ with $c = (\sigma,f,q)$, each $i$ with $1\leq i \leq \ar(f)$ and each $a\in D$, we have
\begin{equation}\label{eqn:constr}
		\sum_{\nu\in D^{\ar(f)}: \nu_i = a} \!\! \!\! \lambda_{c,\nu} \;
		= \; \mu_{\sigma_i,a} ; 
\end{equation}
for each $x \in V$, we have
\begin{equation}
\sum_{a\in D} \mu_{x,a} \; = \; 1;
\end{equation}
 and for all variables $\lambda_{c,\nu}$ and $\mu_{x,a}$ we have
\begin{equation}
0\leq \lambda_{c,\nu}\leq 1 \quad \text{and} \quad 0 \leq \mu_{x,a} \leq 1 .
\end{equation}

A feasible \emph{integer} solution to the above system defines a solution $h: V \ra D$ to the instance $I$, given by $h(x) = a$ iff $\mu_{x,a} = 1$.  Equations~\ref{eqn:constr} then ensure that $\lambda_{c,\nu} = 1$ for $c=(\sigma,f,q)$ just in case $h(\sigma) = \nu$.  Thus, it is clear that an optimal integer solution gives us an optimal solution to $I$.

 If we consider \emph{rational} solutions instead of integer ones, we  obtain the \emph{basic LP-relaxation} of $I$, which we denote $\BLP(I)$.  The following theorem characterises for which languages $\Gamma$
	$\BLP(I)$ has the same optimal solutions as $I$.
	
	For the statement of the dichotomy result from \cite{tz13:stoc}, we need to 
	introduce an additional notion.	
	We say the property $(\XOR)$ holds for a valued constraint language $\Gamma$
	over domain $D$ if there are $a,b\in D, a\neq b$, such that $\langle\Gamma\rangle$
	contains a binary function $f$ with $\argmin\ f=\{(a,b),(b,a)\}$.
	
	\begin{theorem}[Theorem 3.3, \cite{tz13:stoc}]\label{thm:dichotomy}
		Let $\Gamma$ be a core over some finite domain
		$D$.
		\begin{itemize}
		  \item Either for each instance $I$ of $\VCSP(\Gamma)$, the optimal solutions of $I$ are the same as $\BLP(I)$;
		  \item or property $(\XOR)$ holds for $\Gamma_c$ and $\VCSP(\Gamma)$ is
		  		$\NP$-hard.
		\end{itemize}
	\end{theorem}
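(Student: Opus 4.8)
The plan is to prove the two alternatives through a dichotomy on the \emph{fractional polymorphisms} admitted by $\Gamma$. Recall that a $k$-ary fractional polymorphism is a probability distribution $\omega$ over $k$-ary operations on $D$ satisfying, for every $f\in\Gamma$ of arity $m$ and all tuples $a^1,\dots,a^k\in D^m$,
\[
\sum_{g}\omega(g)\,f\bigl(g(a^1,\dots,a^k)\bigr)\;\leq\;\tfrac1k\sum_{j=1}^k f(a^j),
\]
where $g$ is applied componentwise; it is \emph{symmetric} if every $g$ in its support is invariant under permutation of its arguments. The central fact I would establish first is that the integer optimum of $I$ equals the optimum of $\BLP(I)$ for \emph{every} instance $I$ if and only if $\Gamma$ admits a symmetric fractional polymorphism of every arity $n\geq 2$. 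The two alternatives of the theorem then reduce to whether or not these polymorphisms exist.

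I would prove this characterisation by LP duality applied to $\BLP$. For the direction giving tractability, take the optimal fractional solution $(\lambda,\mu)$ of $\BLP(I)$: each variable $x$ carries a distribution $\mu_{x,\cdot}$ on $D$ and each constraint a distribution $\lambda_{c,\cdot}$ on $D^{\ar(f)}$, consistent on marginals by Equation~\ref{eqn:constr}. Writing these rationals over a common denominator $n$, one realises them as empirical distributions of $n$ integer assignments and combines those assignments by a symmetric fractional polymorphism of arity $n$; the polymorphism inequality forces the expected cost of the result to be at most the average input cost, which equals the $\BLP$ optimum, so an integer solution matches the relaxation. Conversely, if no symmetric fractional polymorphism of some arity $n$ exists, the linear feasibility problem defining such polymorphisms is infeasible, and Farkas' lemma yields a dual certificate, which I would assemble into a single gadget instance exhibiting a strict integrality gap for $\BLP$.

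It remains to connect the absence of these polymorphisms to the $(\XOR)$ property of $\Gamma_c$, and thence to $\NP$-hardness. Here I would use that $\Gamma$ is a core: the core condition rules out improving operations that collapse $D$ onto a proper subdomain, so the only obstruction to a symmetric fractional polymorphism is genuinely two-dimensional. Concretely, from a gadget with an integrality gap I would localise the gap to a pair of domain elements $a\neq b$ and, by minimising the gadget's cost over solutions that fix two distinguished variables, express a binary function $f$ whose minimum over $D^2$ is attained exactly on $\{(a,b),(b,a)\}$; fixing the remaining parameters to constants of $D$ is precisely what passing to $\Gamma_c$ permits, so $f\in\langle\Gamma_c\rangle$ and the $(\XOR)$ property holds for $\Gamma_c$. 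Restricting $f$ to the subdomain $\{a,b\}$ then encodes $\MAXCUT$: put a variable at each vertex and the constraint $f$ on each edge, so that $f$ rewards disagreement and a minimum-cost assignment computes a maximum cut. Since $\VCSP(\Gamma_c)$ reduces to $\VCSP(\Gamma)$ (using the core to pin variables to constants) and $\VCSP(\langle\Gamma_c\rangle)$ reduces to $\VCSP(\Gamma_c)$ by Lemma~\ref{lem:expower}, $\NP$-hardness of $\VCSP(\Gamma)$ follows.

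The main obstacle I anticipate is the middle step: turning the abstract infeasibility of the fractional-polymorphism LP into a concrete $\VCSP(\Gamma)$ instance with a \emph{localisable} integrality gap. The Farkas certificate is only a signed combination of polymorphism inequalities, and converting it into a genuine instance with nonnegative weights and a real gap between its relaxed and integer optima requires care; isolating a single pair $(a,b)$ on which the $\XOR$ behaviour is visible, rather than an obstruction diffused across all of $D$, is exactly where the core hypothesis and the reduction to a binary symmetric fractional polymorphism must be used in full force.
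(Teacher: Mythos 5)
You should first note that the paper you are working from does not prove this statement at all: it is imported verbatim as Theorem~3.3 of Thapper and \v{Z}ivn\'y \cite{tz13:stoc}, so there is no in-paper proof to compare against. Measured instead against the argument in the cited source, your outline has the right landmarks --- symmetric fractional polymorphisms, the LP-duality characterisation of when $\BLP$ is exact, and the $\MAXCUT$ reduction from the $(\XOR)$ function --- but it has a genuine gap exactly at the step you yourself flag as the ``main obstacle,'' and the gap is not merely one of care.

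The dichotomy as stated is exhaustive only because of a theorem you never invoke: for a core $\Gamma$, if $\Gamma_c$ admits a \emph{binary} symmetric fractional polymorphism, then it admits symmetric fractional polymorphisms of \emph{every} arity. This is the central combinatorial result of \cite{tz13:stoc} (built from the support of the binary polymorphism via an iterated construction on generalised operations), and it is what collapses the a priori trichotomy --- (a) all arities present, (b) binary absent, (c) binary present but some higher arity absent --- into the two cases of the theorem. Your plan handles (a) via the $\BLP$ rounding argument and would handle (b) via Farkas/Motzkin duality applied to the feasibility LP for \emph{binary} symmetric fractional polymorphisms (which is indeed how the $(\XOR)$ function is extracted: the dual certificate is massaged, using the core and $\Gamma_c$ hypotheses, into an expressible binary function minimised exactly on $\{(a,b),(b,a)\}$). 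But in case (c) your proposed route --- take an integrality-gap instance for some arity $n$ and ``localise the gap to a pair of domain elements'' --- has no mechanism behind it: a Farkas certificate for the infeasibility of the arity-$n$ polymorphism LP is a signed combination of inequalities over $n$-tuples of assignments, and nothing in the core hypothesis forces such an obstruction to descend to a binary one. Absent the binary-implies-all-arities theorem, your argument leaves open the possibility of a core language on which $\BLP$ fails yet no $(\XOR)$ function is expressible, i.e.\ it does not prove the dichotomy. The remaining pieces (the rounding of an optimal $\BLP$ solution by an arity-$n$ symmetric fractional polymorphism with $n$ the common denominator, and the $\MAXCUT$ encoding giving $\NP$-hardness) are sound and match the source.
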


	\subsection{Logic}

A relational \emph{vocabulary} (also called a \emph{signature} or a \emph{language}) $\tau$ is a finite sequence of relation and constant symbols $(R_1, \dots, R_k, c_1, \dots, c_l)$, where every relation symbol $R_i$ has a fixed \emph{arity} $a_i \in \NN$. A structure $\struct A = (\univ A, R_1^{\struct A}, \dots, R_k^{\struct A}, c_1^{\struct A}, \dots, c_l^{\struct A})$ over the signature $\tau$ (or a \emph{$\tau$-structure}) consists of a non-empty set $\univ A$, called the \emph{universe} of $\struct A$, together with relations $R_i^{\struct A} \subseteq \univ A^{a_i}$ and constants $c_j^{\struct A} \in \univ A$ for each $1 \leq i \leq k$ and $1 \leq j \leq l$. Members of the set $\univ A$ are called the \emph{elements} of $\struct A$ and we define the \emph{size} of $\struct A$ to be the cardinality of its universe.
	
	\subsubsection{Fixed-point Logic with Counting}

Fixed-point logic with counting (\FPC) is an extension of inflationary fixed-point logic with the ability to express the cardinality of definable sets. The logic has two sorts of first-order variable: \emph{element variables}, which range over elements of the structure on which a formula is interpreted in the usual way, and \emph{number variables}, which range over some initial segment of the natural numbers. We  write element variables with lower-case Latin letters $x, y, \dots$ and use lower-case Greek letters $\mu, \eta, \dots$ to denote number variables.

The atomic formulas of $\FPC[\tau]$ are all formulas of the form $\mu
= \eta$ or $\mu \le \eta$, where $\mu, \eta$ are number variables; $s
= t$ where $s,t$ are element variables or constant symbols from
$\tau$; and $R(t_1, \dots, t_m)$, where each $t_i$ is either an
element variable or a constant symbol and $R$ is a relation symbol
(i.e.\ either a symbol from $\tau$ or a relational variable) of arity
$m$.  Each relational variable of arity $m$ has an associated type
from $\{\mathrm{elem},\mathrm{num}\}^m$.  The set $\FPC[\tau]$ of \emph{$\FPC$ formulas} over $\tau$ is built up from the atomic formulas by applying an inflationary fixed-point operator $[\ifpop_{R,\tup x}\phi](\tup t) $; forming \emph{counting terms} $\countingTerm{x} \phi$, where $\phi$ is a formula and $x$ an element variable; forming formulas of the kind $s = t$ and $s \le t$ where $s,t$ are number variables or counting terms; as well as the standard first-order operations of negation, conjunction, disjunction, universal and existential quantification. Collectively, we refer to element variables and constant symbols as \emph{element terms}, and to number variables and counting terms as \emph{number terms}.

For the semantics, number terms take  values in $\{0,\ldots,n\}$,
where $n= \univ{A}$ and
element terms take values in $\dom(\struct A)$. The semantics of atomic formulas,
fixed-points and first-order operations are defined as usual (c.f.,
e.g., \cite{EF99} for details), with comparison of number terms
$\mu \le \eta$ interpreted by comparing the corresponding integers in
$\{0,\ldots,n\}$. Finally, consider a counting term of the form
$\countingTerm{x}\phi$, where $\phi$ is a formula and $x$ an element
variable. Here the intended semantics is that $\countingTerm{x}\phi$
denotes the number (i.e.\ the element of $\{0,\ldots,n\}$) of elements that
satisfy the formula $\phi$.
For a more detailed definition of $\FPC$, we refer the reader
to~\cite{EF99, Lib04}.  

We also consider $\infC$---the infinitary logic with counting, and
finitely many variables.  We will not define it formally (the
interested reader may consult~\cite{Ott97}) but we need the following
two facts about it: its expressive power properly subsumes that of
$\FPC$, and it is closed under $\FPC$-reductions, defined below.

It is known by the Immerman-Vardi theorem~\cite{EF99} that fixed-point logic can express all polynomial-time properties of finite ordered structures.  It follows that in $\FPC$ we can express all polynomial-time relations on the number domain.  In particular, we have formulas with free number variables $\alpha,\beta$ for defining sum and product, and we simply write $\alpha+ \beta$ and $\alpha\cdot \beta$ to denote these formulas.   For a number term $\alpha$ and a non-negative integer $m$, we write $\alpha=m$ as short-hand for the formula that says that $\alpha$ is exactly $m$.
We write $\BIT(\alpha,\beta)$ to denote the formula that is true just in case the $\beta$-th bit in the binary expansion of $\alpha$ is $1$.  Finally, for each constant $c$, we assume a formula $\MULT_c(W,x,y)$ which works as follows.  If $B$ is an ordered set and $W\subseteq B$ is a unary relation that codes the binary representation of an integer $w$, then $\MULT_c$ defines a binary relation $R \subseteq B^2$ which on the lexicographic order on $B^2$ defines the binary representation of $c\cdot w$.

	\subsubsection{Reductions}
	
We frequently consider ways of defining one structure within another in some logic $\logic L$, such as first-order logic or $\FPC$.  Consider two signatures $\sigma$ and $\tau$ and a logic $\logic L$. An \emph{$m$-ary $\logic L$-interpretation of $\tau$ in $\sigma$} is a sequence of formulae of $\logic L$ in vocabulary $\sigma$ consisting of:
(i) a formula $\delta(\tup x)$;
(ii) a formula $\varepsilon(\tup x, \tup y)$;
(iii) for each relation symbol $R \in \tau$ of arity $k$, a formula $\phi_R(\tup x_1, \dots, \tup x_k)$; and
(iv) for each constant symbol $c \in \tau$, a formula $\gamma_c(\tup x)$,
where each $\tup x$, $\tup y$ or $\tup x_i$ is an $m$-tuple of free variables. We call $m$ the \emph{width} of the interpretation. We say that an interpretation $\Theta$ associates a $\tau$-structure $\struct B$ to a $\sigma$-structure $\struct A$ if there is a surjective map $h$ from the $m$-tuples $\{ \tup a \in \univ{A}^m \mid \struct A \models \delta[\tup a] \}$ to $\struct B$ such that:

\begin{itemize}
	\item $h(\tup a_1) = h(\tup a_2)$ if, and only if, $\struct A \models \varepsilon[\tup a_1, \tup a_2]$;
	
	\item $R^\struct{B}(h(\tup a_1), \dots, h(\tup a_k))$ if, and only if, $\struct A \models \phi_R[\tup a_1, \dots, \tup a_k]$;
	
	\item $h(\tup a) = c^\struct{B}$ if, and only if, $\struct A \models \gamma_c[\tup a]$.
\end{itemize}

\noindent 
Note that an interpretation $\Theta$ associates a $\tau$-structure with $\struct A$ only if $\varepsilon$ defines an equivalence relation on $\univ{A}^m$ that is a congruence with respect to the relations defined by the formulae $\phi_R$ and $\gamma_c$. In such cases, however, $\struct B$ is uniquely defined up to isomorphism and we write $\Theta(\struct A) \defeq \struct B$. 
Throughout this paper, we will often use interpretations where $\varepsilon$ is simply defined
as the usual equality on $\tup a_1$ and $\tup a_2$. In these instances, we omit the explicit
definition of $\varepsilon$.

The notion of interpretations is used to define logical reductions. Let $C_1$ and $C_2$
be two classes of $\sigma$- and $\tau$-structures respectively. We say $C_1$ \emph{$\logic L$-reduces}
to $C_2$ if there is an $\logic L$-interpretation $\Theta$ of $\tau$ in $\sigma$, such that
$\Theta(\struct{A})\in C_2$ if and only if $\struct{A}\in C_1$, and we write $C_1\leq_{\logic L}C_2$.

It is not difficult to show that formulas of \FPC compose with reductions in the sense that, given an interpretation $\Theta$ of $\tau$ in $\sigma$ and a $\sigma$-formula $\phi$, we can define a
$\tau$-formula $\phi'$ such that $\struct A \models \phi'$ if, and
only if, $\Theta(\struct A) \models \phi$.  Moreover $\infC$ is closed
under $\FPC$-reductions.  So if $C_2$ is definable in $\infC$ and
$C_1 \leq_{\logic L}C_2$, then $C_1$ is also definable in $\infC$.

		
		

	\subsubsection{Representation}
	In order to discuss definability of constraint satisfaction and linear programming problems,
	we need to fix a representation of instances of these problems as relational structures. Here, we 
	describe the representation we use.
	
	\noindent \textbf{Numbers and Vectors.} We represent an integer $z$ as a 
	relational structure in the following way. Let $z=s\cdot x$, with $s\in\{-1,1\}$
	being the sign of $z$, and $x\in\mathbb{N}$, and let $b\geq \lceil\log_2(x)\rceil$.
	We represent $z$ as the structure $\struct{z}$ with universe $\{1,\ldots,b\}$
	over the vocabulary $\tau_{\mathbb{Z}}=\{X,S,<\}$, where $<$ is interpreted 
	the usual linear order on $\{1,\ldots,b\}$; $S^{\struct{z}}$ is a unary relation
	where $S^{\struct{z}}=\emptyset$ indicates that $s=1$, and $s=-1$ otherwise;
	and $X^{\struct{z}}$ is a unary relation that encodes the bit representation
	of $x$, i.e.\ $X^{\struct{z}}=\{k\in\{1,\ldots,b\} \mid \BIT(x,k)=1\}$. In a similar
	vein, we represent a rational number $q=s\cdot \frac{x}{d}$ by a structure
	$\struct{q}$ over the domain $\tau_{\mathbb{Q}}=\{X,D,S,<\}$, where the
	additional relation $D^{\struct{q}}$ encodes the binary representation of the
	denominator $d$ in the same way as before.
	
	In order to represent vectors and matrices over integers or rationals, we
	have multi-sorted universes.
	Let $T$ be a non-empty set, and let $v$ be a 
	vector of integers indexed by $T$.  We represent $v$ as a structure $\struct{v}$
	with a two-sorted universe with an index sort  $T$, and bit 
	sorts $\{1,\ldots,b\}$, where $b\geq\lceil\log_2(|m|)\rceil$,
	$m=\max_{t\in T}v_t$, over the vocabulary $(X,D,S,<)$. Now, the
	relation $S$ is of arity $2$, and $S^{\struct{v}}(t,\cdot)$ encodes the sign
	of the integer $v_t$ for $t\in T$.  Similarly, $X$ is a binary relation interpreted
	as $X^{\struct{v}}=\{(t,k)\in T\times \{1,\ldots,b\}\mid \BIT(v_t,k)=1\}$.
	In order to represent 
	matrices $M\in \mathbb{Z}^{T_1\times T_2}$, indexed by two sets $T_1, T_2$,
	we allow three-sorted universes with
	two sorts of index sets. The generalisation to rationals carries over from 
	the numbers case.  We write $\vocvec$ to denote the vocabulary for vectors over $\QQ$ and $\vocmat$ for the vocabulary for matrices over $\QQ$.
	
	\noindent \textbf{Linear Programs.} Let an instance of a linear optimization problem
	be given by a constraint maxtrix $A\in \mathbb{Q}^{C\times V}$,
	and vectors $\tup b\in\mathbb{Q}^C, \tup c\in\mathbb{Q}^V$ over some set of variables
	$V$ and constraints $C$. We represent this instance in the natural way as a structure over
	the vocabulary $\tau_{\LP}=\tau_{vec}\dunion\tau_{mat}$. 

We can now state the result from~\cite{adh13:lics} that we require, to the effect that there is an $\FPC$ interpretation that can define solutions to linear programs.
	\begin{theorem}[Theorem 11, \cite{adh13:lics}]\label{thm:lpdefinable}
		Let an instance $(A\in \mathbb{Q}^{C\times Q}, \tup b\in\mathbb{Q}^C, \tup c\in\mathbb{Q}^V)$
		of a LP be explicitly given by a relational representation in $\tau_{LP}$.
		Then, there is a $\FPC$-interpretation that defines a representation of
		$(f\in\mathbb{Q},\tup v\in\mathbb{Q}^V)$, such that $f=1$ if and only if
		$\max_{\tup x\in P_{A,\tup b}} \tup c^T\tup x$ is unbounded, $\tup v\notin P_{A,\tup b}$ if and
		only if there is no feasible solution, and 
		$f=0, \tup v=\mathrm{argmax}_{\tup x\in P_{A,\tup b}} \tup c^T\tup x$ otherwise.
	\end{theorem}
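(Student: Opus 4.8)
The plan is to implement the ellipsoid method for linear programming inside \FPC. Since the numeric data (the entries of $A$, $\tup b$, $\tup c$) are presented in binary over ordered bit-sorts while the index sets $V$ and $C$ are unordered, the difficulty is not the arithmetic as such but the requirement that every intermediate object be invariant under the automorphisms of the input structure; a naive polynomial-time implementation is ruled out because the separation oracle of the ellipsoid method must \emph{choose} a violated constraint, and \FPC{} admits no such choice. I would first reduce the three-way classification (infeasible / unbounded / optimal) to a handful of \emph{feasibility} questions: whether the primal $A\tup x\le\tup b$ is feasible; whether its dual is feasible; and, when both are, whether the combined primal--dual system $A\tup x\le\tup b,\ A^{T}\tup y=\tup c,\ \tup y\ge 0,\ \tup c^{T}\tup x=\tup b^{T}\tup y$ has a solution, which by strong duality is an optimal primal--dual pair. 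Primal infeasibility yields the output $\tup v\notin P_{A,\tup b}$; primal feasibility together with dual infeasibility yields $f=1$; and a solution of the combined system yields $f=0$ with $\tup v$ optimal. Each of these is an instance of \LP{} feasibility, so it suffices to define feasibility-testing together with a witness in \FPC.

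For the feasibility core I would run the ellipsoid method. \FPC{} can represent the current ellipsoid as a rational centre $\tup x$ and a rational positive-definite matrix $P$ indexed by $V$, can initialise it as a ball whose radius is bounded by the polynomially many bits of the input, and can count the polynomially many iterations on the number sort. All of the linear-algebra primitives required for the update---matrix--vector products, the scalar $\tup d^{T}P\tup d$, and ultimately the solution of the linear systems used to round a near-optimal point to an exact vertex---are sums and products over the unordered index set $V$ that \FPC{} defines via counting terms together with its built-in arithmetic on number terms, invoking the known fact that \FPC{} can solve systems of linear equations over $\QQ$. The square root in the ellipsoid update is handled as in the approximate-arithmetic variant of the method: it is computed to polynomially many bits by a \FPC-definable Newton iteration on the ordered bit-sort, and the ellipsoid is inflated by a controlled amount so that it provably continues to contain the feasible region.

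The crux is the separation oracle, and this is where choicelessness bites. Rather than selecting one constraint violated at the centre $\tup x$, I would take the cut direction to be the \emph{sum} $\tup d=\sum_{i\in S}\tup a_i$ of the normals of all violated constraints, where $S=\{i : \tup a_i\cdot\tup x> b_i\}$ and $\tup a_i$ is the $i$-th row of $A$; this set, and hence the sum, is \FPC-definable and automorphism-invariant. It gives a genuine central cut: for any feasible $\tup y$ we have $\tup d\cdot\tup y=\sum_{i\in S}\tup a_i\cdot\tup y\le\sum_{i\in S}b_i<\sum_{i\in S}\tup a_i\cdot\tup x=\tup d\cdot\tup x$, so the feasible region lies strictly on one side of the hyperplane through $\tup x$ with normal $\tup d$, which is exactly what the central-cut volume-reduction estimate needs. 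The degenerate case $\tup d=\tup 0$ is in fact informative: it forces $\sum_{i\in S}b_i<0$, whereas any feasible $\tup y$ would give $0\le\sum_{i\in S}b_i$, so it certifies infeasibility directly. After the prescribed number of iterations the method either exhibits a feasible centre or the ellipsoid volume drops below the threshold guaranteed by LP theory, again certifying infeasibility.

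Two points complete the argument. To output an exact rational vector rather than an approximation, I would round the near-optimal point to the point it determines by identifying the tight constraints and solving the associated exact linear system, using \FPC{} linear algebra; the bit-length of the result is polynomially bounded, hence representable in $\vocvec$ with \FPC-definable bits. Second, because optima need not be unique and an \FPC{} interpretation can only define an automorphism-invariant object, I would fix a canonical optimum---for instance the optimal solution of minimum Euclidean norm, which is unique by strict convexity, invariant under the automorphism group, and definable by the same (convex) ellipsoid machinery applied to the gradient $2\tup x$ over the optimal face. The main obstacle throughout is reconciling the inherently choice-based separation oracle with the symmetry-invariance of \FPC, and the symmetric cut $\tup d=\sum_{i\in S}\tup a_i$ is the device that resolves it.
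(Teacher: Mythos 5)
This statement is imported verbatim from Anderson, Dawar and Holm~\cite{adh13:lics}; the paper offers no proof of its own, so there is nothing internal to compare against. Your proposal reconstructs essentially the argument of that reference: reducing the three-way classification to feasibility questions via strong duality, implementing the ellipsoid method on top of \FPC-definable linear algebra over $\QQ$, and---the decisive choiceless ingredient---replacing the separation oracle's arbitrary choice by the automorphism-invariant cut $\tup d=\sum_{i\in S}\tup a_i$ over all violated constraints, which is exactly the device that makes the method definable. The one step I would flag as needing more care is the final canonicalisation of $\tup v$: minimising the Euclidean norm over the optimal face is a convex \emph{quadratic} programme, not an LP, so it requires the convex-programming variant of the ellipsoid method plus an exact-rounding argument of its own; this is a refinement rather than a gap, but it is where the bookkeeping is heaviest.
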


	\noindent \textbf{CSPs.} 
We next examine how instances of $\VCSP(\Gamma)$ for finite $\Gamma$ are represented as relational structures.  We return to the case of infinite $\Gamma$ in Section~\ref{s:inf}.

	For a fixed finite language
	$\Gamma = \{f_1,\ldots,f_k\}$, we represent an instance $I$ of $\VCSP(\Gamma)$ as a structure
	$\struct{I}=(\univ I,<,(R^{\struct I}_f)_{f\in\Gamma},W_N^\struct{I},W_D^\struct{I})$
	over the vocabulary $\tau_\Gamma$. The universe $\univ I =V\dunion C\dunion B$
	is a three-sorted set, consisting of variables $V$, constraints $C$, and a set $B$ of \emph{bit positions}.  We assume that $|B|$ is at least as large as the number of bits required to represent the numerator and denominator of any rational weight occurring in $I$. The relation $<$ is a linear order on $B$. The 
	relation $R_f^\struct{I} \subseteq V^{ar(f)}\times C$
	contains $(\sigma,c)$ if $c=(\sigma, f,q)$ is a constraint in $I$.  
	The relations $W_N^\struct{I},W_D^\struct{I} \subseteq C\times B$ encode the weights of the
	constraints: $W_N^\struct{I}(c,\beta)$ (or $W_D^\struct{I}(c,\beta)$) holds if and only if the $\beta$-th bit of the 
	bit-representation of the numerator (or denominator, respectively) of the weight of
	constraint $c$ is one.
	For the decision version of the VCSP, we have two additional
	unary relation $T_N$ and $T_D$ in the vocabulary which
	encode the binary representation of the numerator and
	denominator of the threshold constant of the instance.

We are now ready to define what it means to express $\VCSP(\Gamma)$ in a logic such as $\FPC$.  For a fixed finite langauge $\Gamma$, we say that the decision version of $\VCSP(\Gamma)$ is definable in a logic $L$ if there is some $\tau_{\Gamma} \cup \{T_N,T_D\}$-sentence $\phi$ of $L$ such that $\struct I \models \phi$ if, and only if, $I$ is satisfiable.  We say that $\VCSP(\Gamma)$ is definable in $\FPC $ if there is an $\FPC$ interpretaion $\Theta$ of the vocabulary $\tau_{\QQ}$ in $\tau_{\Gamma}$ such that for any $\struct I$, $\Theta(\struct I)$ codes the value of an optimal solution for the instance $I$.

\section{Definable Reductions} \label{sec:reductions}

An essential part of the machinery that leads to Theorem~\ref{thm:dichotomy} is that the computational complexity of $\VCSP(\Gamma)$ is robust under certain changes to $\Gamma$.  In other words, closing the class of functions $\Gamma$ under certain natural operations does not change the complexity of the problem.  This is established by showing that the distinct problems obtained are inter-reducible under polynomial-time reductions.  Our aim in this section is to show that these reductions can be expressed as interpretations in a suitable logic (in some cases first-order logic suffices, and in others we need the power of counting).  

The following lemma is analogous to Lemma~\ref{lem:expower} and shows that the reductions there can be expressed as logical interpretations.
		
	\begin{lemma}\label{lem:expower_to_gamma}
		Let $\Gamma$ and $\Gamma'$ be valued constraint languages over domain $D$ of finite
		sizes such that $\Gamma' \subseteq \langle\Gamma\rangle$.
		Then $\VCSP(\Gamma') \le_{\FPC} \VCSP(\Gamma)$.
	\end{lemma}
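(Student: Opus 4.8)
The plan is to realise, as a single \FPC-interpretation $\Theta$, the gadget-substitution reduction that underlies Lemma~\ref{lem:expower}. Since $\Gamma'$ is finite and $\Gamma'\subseteq\langle\Gamma\rangle$, I may fix once and for all, for each $f'\in\Gamma'$, an instance $I_{f'}=(V_{f'},C_{f'})\in\VCSP(\Gamma)$ together with a distinguished tuple $\tup v^{f'}=(v^{f'}_1,\dots,v^{f'}_{\ar(f')})$ that expresses $f'$ in the sense of Definition~\ref{def:expower}. These gadgets, their variables and constraints (each a triple $(\tau,g,w)$ with $g\in\Gamma$ and $w\in\QQ$ a fixed constant), and the choice of distinguished variables, form a fixed finite amount of data hard-wired into $\Theta$. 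On an input instance $\struct{I'}$ of $\VCSP(\Gamma')$ the interpretation outputs the instance $\struct I$ obtained by replacing every constraint $c=(\sigma,f',q)$ of $I'$ with a fresh copy of $I_{f'}$, identifying the $p$-th distinguished variable of the copy with $\sigma_p$, keeping the remaining variables of the copy private to $c$, and scaling every weight $w$ of $I_{f'}$ by $q$. Throughout, let $f'_c\in\Gamma'$ denote the (first-order readable) function of a constraint $c$, i.e.\ the unique $f'$ with $R_{f'}^{\struct{I'}}(\sigma,c)$ for some $\sigma$.

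First I would set up the multi-sorted output universe. The variable sort $V$ consists of the original variables $V'$, tagged as \emph{external}, together with one private copy of each internal gadget variable for each $c\in C'$; the constraint sort $C$ consists of one element $(c,l)$ for each $c\in C'$ and each constraint index $l$ of the gadget $I_{f'_c}$; and the bit sort $B$ consists of $B'$ together with a fixed constant number of fresh high-order positions (coded by a dedicated range of index tuples). Because the number of gadget variables and constraints is bounded by a constant independent of the input, I encode the intra-gadget indices by a fixed-length tuple of number variables constrained to $\{0,1\}$ (definable by $\eta\le 1$, and $\{0,1\}\subseteq\{0,\dots,n\}$ for every non-empty universe), so each output element is coded by one input element paired with such an index tuple. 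The domain formulas $\delta_V,\delta_C,\delta_B$ then carve out exactly the valid codes by a finite case distinction on $f'_c$, and $\varepsilon$ is plain equality, so no quotient is required.

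The relations are defined by a finite case analysis over the fixed gadget template. For $f\in\Gamma$, the tuple $\big((w_1,\dots,w_{\ar(f)}),(c,l)\big)$ lies in $R_f^{\struct I}$ exactly when the $l$-th constraint of $I_{f'_c}$ uses $f$ with some gadget scope $(u_1,\dots,u_{\ar(f)})$ and, for each $i$, $w_i$ is the external variable $\sigma_p$ when $u_i$ is the $p$-th distinguished variable (with $\sigma_p$ read off from $R_{f'_c}^{\struct{I'}}$) and $w_i$ is the private copy of $u_i$ otherwise; this is first-order. The weight relations are the one place where genuine arithmetic is needed: the weight of the new constraint $(c,l)$ is $q\cdot(p_l/e_l)$ with $p_l/e_l$ the fixed gadget constant, so its numerator and denominator are obtained from the input bit-relations $W_N^{\struct{I'}}(c,\cdot)$ and $W_D^{\struct{I'}}(c,\cdot)$ by multiplication with the constants $p_l$ and $e_l$, for which I use the formulas $\MULT_{p_l}$ and $\MULT_{e_l}$ and fix the sign by an $\XOR$ of the two sign bits; the fresh high-order positions of $B$ absorb the constant growth in bit-length. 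The order on $B$ and the threshold relations $T_N,T_D$ are copied from the input in the obvious first-order way.

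Correctness is inherited from Lemma~\ref{lem:expower}: $\Theta(\struct{I'})$ is precisely the instance produced by that reduction, so membership is preserved, the underlying reason being that minimising over the private variables of each copy contributes exactly $q\cdot f'_c(h(\sigma))$ to $cost_{\struct I}$, whence $\min_h cost_{\struct I}(h)=\min_{h'}cost_{\struct{I'}}(h')$ and, with the threshold unchanged, $\struct{I'}$ is a positive instance iff $\Theta(\struct{I'})$ is. The main obstacle I anticipate is the bookkeeping around the weights: making the constant scaling definable via $\MULT_c$, allocating enough bit positions for the products, and keeping numerator, denominator and sign mutually consistent. The combinatorial wiring of the gadgets, by contrast, is a routine first-order definition over a fixed finite template, and the creation of private variables is the standard product with a constant-size index set.
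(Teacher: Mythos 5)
Your proposal follows essentially the same route as the paper's proof: hard-wire the fixed expressing gadgets $I_{f'}$ and distinguished tuples into the interpretation, build the output universe from input elements paired with constant-size intra-gadget indices, define the constraint relations by a finite case analysis over the gadget templates, and realise the weight scaling $q\cdot r$ via $\MULT_c$ on an enlarged bit sort. The only difference is cosmetic — you index gadget variables/constraints by fixed-length $\{0,1\}$-tuples of number variables, whereas the paper uses a single number term into $\{1,\dots,n_{\hat V}\}$ and special-cases the finitely many too-small instances — so the argument is correct and matches the paper.
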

	\begin{proof}
The construction of the reduction follows closely  the proof of Theorem 3.4.\ in \cite{ccjk06:ai}, while ensuring it is definable in $\FPC$.
		
		Let $I=(V,C)$ be a given instance of $\VCSP(\Gamma')$.
		We fix for 
		each function $f\in\Gamma'$ of arity $m$ an instance $I_f=(V_f,C_f)$ of $\VCSP(\Gamma)$
		and a $m$-tuple of distinct elements $\tup{v}_f\in V_f^m$ that together
		express $f$ in the sense of Definition~\ref{def:expower}. The idea is now
		to replace each constraint
		$c=(\sigma,f,q)\in C$ by a copy of $I_f$ where the variables $v_{f1},\ldots,v_{fm}$
		in $I_f$ are identified with $\sigma_1,\ldots,\sigma_m$, and the 
		remaining variables are fresh. Since each $I_f$ is an instance of $\VCSP(\Gamma)$,
		the instance $J=(U,E)$ obtained after all replacements is again an instance of 
		$\VCSP(\Gamma)$. Furthermore, by Definition~\ref{def:expower} it has the
		same optimal solution as $I$.
		
		Formally, we define the instance $J=(U,E)$ as follows. The set of variables
		$U$ consists of the variables in $V$ plus a fresh copy of the variables in
		$V_f$ for each constraint in $C$ that uses the
		function $f$, so we can identify $U$ with the
		following set.
		\[U=V\dunion \{(v,c)\mid c \in C, v\in V_f\}.\]
		Each constraint $c=(\sigma,f,q)\in C$ gives rise to a set of constraints $E_c$,
		representing a copy of the constraints in $C_f$.
		\[E_c=\{(h_c(\nu),g,q\cdot r)\mid (\nu,g,r)\in C_f\},\]
		where $h_c:V_f\rightarrow U$ is defined as the mapping $h_c(v)=\sigma_i$,
		if $ v = v_{fi}$, and $h_{c}(v)=(v,c)$ otherwise.
		The set of constraints $E$ is then simply the union of all sets $E_c$.
		\[E=\bigcup_{c\in C}E_c.\]
		
		Let $\tau_\Gamma=(<,(R_f)_{f\in\Gamma},W_N,W_D)$ and 
		$\tau_{\Gamma'}=(<,(R_f)_{f\in\Gamma'},W_N,W_D)$ be the vocabularies for
		instances of $\VCSP(\Gamma)$ and $\VCSP(\Gamma')$
		respectively.  We aim to define an $\FPC$ reduction 
		$\Theta=(\tup{\delta},\varepsilon,\phi_{<},(\phi_{R_f})_{f\in\Gamma},\phi_{W_N},\phi_{W_D})$
		such that $\struct J=\Theta(\struct I)$
		corresponds to the above construction of the instance $J$.
		
		Let an instance $I=(V,C)$ of $\VCSP(\Gamma')$ be given
		as a structure $\struct I$ over $\tau_{\Gamma'}$ with the three-sorted
		universe $\univ I=V\dunion C\dunion B$.  For
		each $m$-ary function $f\in\Gamma$ we have fixed an instance 
		$I_f=(V_f,C_f)$ and a tuple
		$\tup{v}_f=(v_{f1},\ldots,v_{fm})$ that together express $f$.  As the construction of $\struct J$ depends on these instances, we fix an encoding of them in an initial segment of the natural numbers.  To be precise, as the sets 
		$\hat V=\bigcup_{f\in\Gamma'}V_f$
		and $\hat C=\bigcup_{f\in\Gamma'}C_f$ are of fixed size (independent of $I$), let $n_{\hat V}=|\hat V|$ and $n_{\hat C}=|\hat C|$.  We then fix bijections $\var:\hat V \ra \{1,\ldots,n_{\hat V}\}$ and $\con:\hat C \ra \{1,\ldots,n_{\hat C}\}$ such that for each $f\in\Gamma'$, there are intervals $\mathcal{V}_f=[lv_f,rv_f]$ and $\mathcal{C}_f =[lc_f,rc_f]$ such that $\var(V_f) = \mathcal{V}_f$ and $\con(C_f) = \mathcal{C}_f$.  We assume that $\univ I$ is larger than $\max(n_{\hat V},n_{\hat C})$ so that we can use number terms to index the elements of $\hat V$ and $\hat C$.  There are only finitely many instances $I$ smaller than this, and they can be handled in the interpretation $\Theta$ individually.

		In defining the formulas below, for an integer interval $I$ we write $\mu \in I$  as shorthand for the formula $\bigvee_{m \in I} \mu = m$.

		The universe of 
		$\struct J$ is a three-sorted set $\univ{J}=U\dunion E\dunion B'$
		consisting
		of variables $U$, constraints $E$, and bit positions $B'$. The set $U$
		is defined by the formula
		\begin{align*}\delta_U(x,\mu)=&
			\left(x\in C \wedge 
			\bigvee_{f\in\Gamma'}(\exists \tup{y}\in V^{ar(f)}: R_f(\tup{y},x)\wedge \mu \in \mathcal{V}_f)\right)
			\\&\vee(\mu = 0 \wedge x\in V).
		\end{align*}
		In other words, the elements of $U$ consist of pairs $(x,\mu)$, where $x \in V \cup C$ and $\mu$ is an element of the number domain and  
		we make the following case distinction: Either
		$x\in C$ and there is a constraint $x=(\tup y, f, q)$ in $I$, and
		a variable $v \in V_f$ with $\var(v) = \mu$; then the pair represents one of the
		fresh variables in $C\times \hat V$. Or, $x\in V$ and $\mu = 0$ and the pair
		simply represents an element of $V$. 
		
		Similarly, the constraints $E$ are given by
		\begin{align*}
			\delta_E(x,\mu)= x\in C \wedge
			\bigvee_{f\in\Gamma'}(\exists \tup{y}\in V^{ar(f)}: R_f(\tup{y},x)\wedge \mu\in \mathcal{C}_f).
		\end{align*}
		Again, the elements of $E$ are pairs $(x,\mu)$, with $\in C$ and $\mu$ an element of the number domain,
		and we require that if there is a constraint of the form $x=(\tup{y},f,q)$,
		then there is a constraint $c \in C_f$ with $\con(c) = \mu$.
		
		For the domain of bit positions, we just need to make sure that the set is large enough to encode all weights in $J$.  Taking $B'=B^2$ suffices, so
		\[\delta_{B'}(x_1,x_2)=x_1,x_2\in B\] 
and we take $\phi_{<}(\tup x, \tup y)$ to be the formula that defines the lexicographic order on pairs.

		The constraints of $J$ are encoded in the relations $R_g$, $g\in\Gamma$. 
		For an $m$-ary function $g$, this is defined by a formula $\phi_{R_g}$ in the free variables $(x_1,\mu_1,\ldots,x_m,\mu_m,e,\nu)$ where each $(x_i,\mu_i)$ ranges over elements of $U$, and $(e,\nu)$ ranges over elements of $E$.  To be precise, we define the formula by:

		\begin{align*}\phi_{R_g} =
			\bigvee_{f\in\Gamma'} &\left(\exists \vec{y} \in V^{ar(f)}:R_f(\tup y,e)
			\wedge \nu \in \mathcal{C}_f \right.
			\\  &\wedge \left.
		 \bigvee_{e' = (\rho,g,r) \in C_f} \left( \nu =\ con(e')  \land 
                 \bigwedge_{i: \rho_i \in \vec{v}_f}  (x_i = e \land  \mu_i = \var(\rho_i)) \right.\right. \\
                 &\land \left.\left.
      \bigwedge_{i: \rho_i \not\in \vec{v}_f} (x_i = y_i \land \mu_i = 0) \right)
\right).
		\end{align*}
		
		Finally, we define the weight relations. The weight of a constraint $\tup e=(e_1,e_2)$
		is assigned the product of the weight of $e_1\in C$ and the weight of $e_2\in \hat C$.
		We have
		\begin{align*}
			\phi_{W_N}(\tup e, \tup \beta)=\bigvee_{e'\in\hat C}e_2=\con(e')
				\wedge \MULT_{w}(W_N(e_1,\cdot),\tup beta),
		\end{align*}
where $w$ is the numerator of the weight of the constraint $e'$.
		The
		definition of the denominator relation is analogous.
	\end{proof}
	
The next lemma similarly establishes that the reduction in Lemma~\ref{lem:equiv} can be realised as an $\FPC$ interpretation.

	\begin{lemma}\label{lem:equiv_to_gamma}
		Let $\Gamma$ and $\Gamma'$ be valued languages over domain $D$ of 
		finite sizes such that
		$\Gamma' \subseteq \Gamma_\equiv$. Then $\VCSP(\Gamma') \le_{\FPC} \VCSP(\Gamma)$.
	\end{lemma}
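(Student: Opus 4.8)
The plan is to mimic the structure of the proof of Lemma~\ref{lem:expower_to_gamma}, but for the simpler operation of scaling and translation rather than expressibility.  Recall that $\Gamma' \subseteq \Gamma_\equiv$ means that each $f' \in \Gamma'$ is of the form $f' = a_{f'} \cdot f + b_{f'}$ for some $f \in \Gamma$ and rationals $a_{f'} > 0$, $b_{f'}$.  Since $\Gamma'$ is finite, we may fix, for each $f' \in \Gamma'$, a witnessing function $\rho(f') \in \Gamma$ together with the constants $a_{f'}, b_{f'}$; these are finitely many fixed rationals, so they can be hard-coded into the interpretation.  The key observation driving the reduction is that scaling a single function by $a_{f'}$ can be absorbed into the weight $q$ of the corresponding constraint, while the translation by $b_{f'}$ merely shifts the total cost by a fixed additive amount, namely $\sum_{(\sigma, f', q) \in C} q \cdot b_{f'}$, which is independent of the solution $h$ and hence does not affect the location of the minimum.

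First I would set up the $\FPC$-interpretation $\Theta = (\delta, \varepsilon, \phi_<, (\phi_{R_g})_{g \in \Gamma}, \phi_{W_N}, \phi_{W_D})$ of $\tau_\Gamma$ in $\tau_{\Gamma'}$.  Here the universe transformation is almost trivial: the variables $V$ and constraints $C$ are preserved, and the only real work is on the bit-position sort, since the new weights $a_{f'} \cdot q$ may require more bits than the original $q$.  As in the previous lemma I would take $B'$ to be a fixed polynomial blow-up of $B$ (e.g.\ $B^2$ suffices, since each $a_{f'}$ is a fixed constant) and let $\phi_<$ be the lexicographic order.  The relation $\phi_{R_g}$ for $g \in \Gamma$ would collect, for each $f' \in \Gamma'$ with $\rho(f') = g$, every constraint of $I$ that uses $f'$, re-typing it as a $g$-constraint on the same variable tuple: $\phi_{R_g}(\bar x, c) = \bigvee_{f' : \rho(f') = g} R_{f'}(\bar x, c)$.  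The weight relations $\phi_{W_N}, \phi_{W_D}$ would use the $\MULT_{a_{f'}}$ machinery to define the numerator and denominator of $a_{f'} \cdot q$ from the stored bits of $q$, branching on which $f'$ labels the constraint $c$.

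The one genuine subtlety, and the step I expect to be the main obstacle, is the additive translation term.  Unlike scaling, the shift $b_{f'}$ cannot be folded into a per-constraint weight of an instance over $\Gamma$, because constraints in a VCSP instance contribute $q \cdot f(h(\sigma))$ and there is no ``constant'' nullary function available in $\Gamma$ in general to carry a pure offset.  The cleanest way around this, since we only need a reduction $\le_{\FPC}$ between the decision problems (and for the cost-definability version only an additive correction to the output), is to observe that the total translation $\sum_{(\sigma, f', q) \in C} q \cdot b_{f'}$ is a solution-independent constant that the interpretation can compute from $\struct I$.  For the decision version, one therefore adjusts the threshold $t$ by subtracting this constant, defining the new threshold relations $T_N, T_D$ accordingly; since summing the finitely many constant multiples $q \cdot b_{f'}$ over all constraints is an arithmetic operation definable in $\FPC$ on the number sort, this is unproblematic.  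I would handle the finitely many small instances (those whose universe is too small to index the fixed constants) separately inside $\Theta$, exactly as in Lemma~\ref{lem:expower_to_gamma}.  The correctness argument then reduces to the identity $cost_J(h) = \sum_{f'} a_{f'} \cdot (\text{$\Gamma$-cost}) = cost_I(h) - \sum_{(\sigma,f',q)} q \cdot b_{f'}$ after scaling, which preserves the set of optimal solutions and shifts the optimal value by the computed constant.
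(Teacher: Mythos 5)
Your proposal is correct and follows essentially the same route as the paper's own proof: the constraints are re-typed via $\phi_{R_g}(\sigma,d)=\bigvee_{f':\,\rho(f')=g}R_{f'}(\sigma,d)$, the scaling factor is absorbed into the constraint weight, and the translation is dismissed as a solution-independent additive shift. If anything, you are more explicit than the paper about correcting the threshold (or the reported optimal value) by the constant $\sum_{(\sigma,f',q)\in C} q\cdot b_{f'}$, a point the paper's terse argument leaves implicit.
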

	\begin{proof}
		Note that adding constants to the value of constraints does not change the
		optimal solution of the instance. Hence, we only need to adapt to the
		scaling of the constraint functions. This can be achieved by changing
		the weights accordingly.
		  
		Let $I=(V,C)$ be an instance of $\VCSP(\Gamma')$, given as the relational structure 
		$\struct{I}=(\univ I,(R_f)_{f\in\Gamma'},W_N,W_D)$.  We aim to construct an instance $J=(U,E)$ of $\VCSP(\Gamma)$
		with the same optimal solution.
		
		The set of variables of $J$ is $V$. For any $f\in\Gamma'$
		we fix a function $S(f) \in \Gamma$ such that $S(f) \equiv f$. Then,
		the formula 
		$\phi_{R_g}(\sigma, d)=\bigvee_{f\in\Gamma';g=S(f)}R_f(\sigma, d)$ 
		defines the constraints of $J$. Let $d=(\sigma,g,r)$ be any constraint
		in $E$, and $c=(\sigma,f,q)$ be the corresponding constraint in $C$ where
		$g=S(f)$, and $g=a\cdot f+b$ for some $a,b\in\QQ$. 
		We then set the weight $r$ of the constraint $d$ to be $a\cdot q$.
		This can again be defined by a formula in $\FPC$.
	\end{proof}

Next, we show that there is a definable reduction from $\VCSP(\Gamma)$ to the problem defined by a core of $\Gamma$.
	\begin{lemma}\label{lem:gamma_to_core}
		Let $\Gamma$ be a valued language over $D$, and $\Gamma'$ a core of 
		$\Gamma$. Then, $\VCSP(\Gamma)\le_{\FO}\VCSP(\Gamma')$.
	\end{lemma}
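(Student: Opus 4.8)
The plan is to exhibit an explicit first-order interpretation that, given an instance $\struct{I}$ of $\VCSP(\Gamma)$, produces an instance $\struct{J}$ of $\VCSP(\Gamma')$ with the same optimal cost, relying on Lemma~\ref{lem:core_equiv} for correctness. Recall that $\Gamma'$ is a sub-language of $\Gamma$ over a subdomain $D' \subseteq D$, containing exactly the restrictions $f|_{D'}$ of the functions $f \in \Gamma$. Since $D$ and $D'$ are fixed finite sets (independent of the input), there is a fixed bijection between the functions of $\Gamma$ and those of $\Gamma'$: each $f \in \Gamma$ corresponds to its restriction $f' \in \Gamma'$. This correspondence is what makes the reduction essentially trivial on the level of the structure.

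Concretely, I would keep the variable, constraint, and bit-position sorts of $\struct{I}$ entirely unchanged, so the identity interpretation handles the universe, the order $<$, and the weight relations $W_N, W_D$. The only work is to relabel constraints: a constraint $(\sigma, f, q)$ of $\VCSP(\Gamma)$ becomes the constraint $(\sigma, f', q)$ of $\VCSP(\Gamma')$, where $f' = f|_{D'}$. This is captured by setting, for each $f' \in \Gamma'$, the formula
\[
\phi_{R_{f'}}(\tup{y}, c) \;=\; \bigvee_{f \in \Gamma:\, f|_{D'} = f'} R_f(\tup{y}, c),
\]
which is a finite disjunction and hence first-order (indeed quantifier-free over the atoms $R_f$). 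The weights are copied verbatim, so $\phi_{W_N}$ and $\phi_{W_D}$ are just the atoms $W_N$ and $W_D$. No counting is needed, which is why the reduction lands in $\FO$ rather than merely $\FPC$.

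For correctness, observe that $\struct{J} = \Theta(\struct{I})$ codes exactly the instance $I'$ obtained from $I$ by replacing each function of $\Gamma$ by its restriction in $\Gamma'$, in the precise sense of Lemma~\ref{lem:core_equiv}. That lemma then guarantees $\min_h cost_{I}(h) = \min_h cost_{I'}(h)$, so $\Theta$ preserves the optimal cost, which is what a reduction between these decision problems must do (the threshold relations $T_N, T_D$, when present, are likewise copied unchanged). The main point to get right is simply that the passage to a sub-language over $D' \subseteq D$ requires no alteration of the combinatorial data of the instance, only a syntactic renaming of function symbols; the genuine mathematical content—that restricting to a core does not change the optimum—is entirely supplied by Lemma~\ref{lem:core_equiv}, so there is no real obstacle here beyond verifying that the identity-plus-relabelling interpretation is well-defined and first-order.
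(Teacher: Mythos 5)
Your proposal is correct and follows essentially the same route as the paper's own (much terser) proof: reinterpret the instance directly by renaming each $f\in\Gamma$ to its restriction $f|_{D'}\in\Gamma'$, leave the universe and weights untouched, and invoke Lemma~\ref{lem:core_equiv} for preservation of the optimum. Your version merely spells out the interpretation formulas explicitly, which the paper omits.
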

	\begin{proof}
		Since the functions in $\Gamma'$ are exactly those in $\Gamma$, only 
		restricted to some subset of $D$, we can interpret any instance of 
		$\VCSP(\Gamma)$ directly as an instance of $\VCSP(\Gamma')$. Since the
		optimum of both instances are the same, by Lemma \ref{lem:core_equiv},
		this constitutes a reduction.
	\end{proof}

	The next two
	Lemmas together show that $\VCSP(\Gamma)$ and $\VCSP(\Gamma_c)$ are 
	$\FPC$-equivalent. The proof follows closely the proof from \cite{hkp14:joc} that they are polynomial-time equivalent.
	
	\begin{lemma}[Lemma 2, \cite{hkp14:joc}]\label{lem:perm_vcsp}
		Let $\Gamma$ be a core over domain $D$.
		There exists an
		instance $I_p$ of $\VCSP(\Gamma)$ with variables $V=\{x_a \mid a\in D\}$ such
		that $h_{id}(x_a)=a$ is an optimal solution of
		$I_p$ and for every optimal solution $h$, the following hold:
		\begin{enumerate}
		  \item $h$ is injective; and 
		  \item for every instance $I'$ of $\VCSP(\Gamma)$ and every optimal 
		  		solution $h'$ of $I'$, the mapping $s_h\circ h'$ is also an
		  		optimal solution, where $s_h(a):= h(x_a)$.
		\end{enumerate} 
	\end{lemma}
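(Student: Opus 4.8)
The plan is to assemble $I_p$ directly from the core-witnessing instances, together with a small number of extra witnesses that break unwanted symmetries. For each $a\in D$ the core property provides an instance $I_a=(V_a,C_a)$ in which every minimal cost solution assigns $a$ to some variable; I would fix one such optimal solution $h^*_a:V_a\ra D$. In addition, for each bijection $s$ of $D$ that is \emph{not} optimality-preserving --- i.e.\ for which some instance has an optimal solution $h'$ with $s\circ h'$ not optimal --- I would fix a witnessing pair $(I_s,h_s)$, where $h_s$ is an optimal solution of $I_s$ such that $s\circ h_s$ is suboptimal. Since $D$ is finite there are only finitely many bijections, so this is a finite family of witnesses.

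Given this finite family $\mathcal F$ of pairs $(I_w,h^*_w)$, I would build $I_p$ on the variable set $\{x_a\mid a\in D\}$ by taking, for each $w$, a copy of the constraints of $I_w$ in which every variable $v$ is relabelled to $x_{h^*_w(v)}$. Since each relabelling lands in $\{x_b\}$, no auxiliary variables are introduced and $I_p$ has exactly the required variables. Writing $s_h(a)=h(x_a)$, the defining feature of this construction is the identity
\[ cost_{I_p}(h)=\sum_{w\in\mathcal F} cost_{I_w}(s_h\circ h^*_w), \]
because the relabelled copy of $I_w$ evaluates $h$ exactly as $I_w$ evaluates $s_h\circ h^*_w$. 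Each summand is bounded below by the optimum of the corresponding $I_w$, and $h_{id}$ (for which $s_h=\mathrm{id}$ and $s_h\circ h^*_w=h^*_w$) attains all these lower bounds simultaneously. Hence $h_{id}$ is optimal, and, crucially, a solution $h$ is optimal for $I_p$ if and only if $s_h\circ h^*_w$ is optimal for $I_w$ for \emph{every} $w\in\mathcal F$.

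Both required properties then follow from this characterisation. For injectivity, if $h$ is optimal then for each $a$ the map $s_h\circ h^*_a$ is a minimal cost solution of $I_a$, so by the core property $a$ lies in the image of $s_h$; as this holds for all $a\in D$, the map $s_h$ is surjective and hence a bijection of the finite set $D$, so $h$ is injective. For the second property, an optimal $h$ yields a bijection $s_h$ which, by construction, is not one of the excluded non-optimality-preserving bijections (each of those fails the condition at its dedicated witness $(I_s,h_s)$); therefore $s_h$ preserves the optimality of every instance, which is exactly property 2. Conversely, every optimality-preserving permutation survives all the witness conditions and is thus itself optimal, so the optimal solutions of $I_p$ are precisely the optimality-preserving permutations of $D$.

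The cost identity and the two verifications above are routine. The step I expect to carry the real weight is the choice of witness family. The naive construction using only the instances $I_a$ forces every value to appear, and so already guarantees injectivity, but it can admit \emph{spurious} optima: bijections that respect each $h^*_a$ yet destroy the optimum of some other instance (this happens, for example, when the $I_a$ have large optimum sets). The idea that resolves this is to adjoin, for each bad bijection, a single witness detecting its failure; the only thing that makes this legitimate is the finiteness of $D$, which bounds the number of bijections and thereby keeps $I_p$ a finite instance of $\VCSP(\Gamma)$. This is the crux of the argument and mirrors the construction of~\cite{hkp14:joc}.
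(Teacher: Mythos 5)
The paper does not actually prove this lemma: it is imported verbatim as Lemma~2 of \cite{hkp14:joc}, so there is no in-paper argument to compare yours against. Judged on its own terms, your proof is correct and self-contained. The superposition construction, the cost identity $cost_{I_p}(h)=\sum_{w}cost_{I_w}(s_h\circ h^*_w)$, and the resulting characterisation of the optima of $I_p$ as exactly those $h$ for which every $s_h\circ h^*_w$ is optimal for $I_w$ are all sound; injectivity then follows from the core witnesses $I_a$ (surjectivity of $s_h$ on the finite set $D$), and property~2 from the per-bijection witnesses. You are also right to identify the witness family as the crux: the naive construction using only the $I_a$ guarantees that $s_h$ is a permutation but not that it preserves optimality of arbitrary instances, and adjoining one witnessing pair for each non-optimality-preserving bijection (finitely many, as $D$ is finite) is precisely what closes that gap. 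The choice of witnesses is non-constructive, but that is harmless for a purely existential statement, and also for its use in Lemma~\ref{lem:gammac_to_gamma}, where $I_p$ is a fixed instance hard-coded into the interpretation. One small technical point you should patch: since the constraints form a \emph{set} of triples and $cost_I$ sums over that set, distinct constraints can coincide after relabelling (within one copy or across different copies $w$), which would silently break the cost identity; either treat the constraint collection of $I_p$ as a multiset or merge coinciding pairs $(\sigma,f)$ by summing their weights $q$, which leaves the cost function unchanged.
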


	\begin{lemma}\label{lem:gammac_to_gamma} Let $\Gamma$ be a core
		over a domain $D$ of finite size. Then,
		$\VCSP(\Gamma_c)\leq_{\FPC}\VCSP(\Gamma)$.
	\end{lemma}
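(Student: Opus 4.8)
The plan is to mimic the polynomial-time construction of \cite{hkp14:joc} and verify that each piece is an $\FPC$-interpretation. Given an instance $I=(V,C)$ of $\VCSP(\Gamma_c)$, I will build an instance $J$ of $\VCSP(\Gamma)$ that contains a (scaled) copy of the permutation gadget $I_p$ from Lemma~\ref{lem:perm_vcsp}, whose variables $\{x_a \mid a\in D\}$ play the role of the fixed constants of $\Gamma_c$. Concretely, I keep the variables of $V$, add the variables and constraints of $I_p$, and replace each constraint $c=(\sigma,f,q)\in C$ by the single constraint $(\rho,\gamma(f),q)$ of $\VCSP(\Gamma)$, where $\gamma(f)\in\Gamma$ and $s_f,T_f,t_f$ are as in Definition~\ref{def:gamma_c}: position $i=s_f(j)$ of $\rho$ is set to $\sigma_j$, and each fixed position $i\in T_f$ is set to the gadget variable $x_{t_f(i)}$. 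The target threshold $t$ is replaced by $t' = t + N\cdot m_p$, where $m_p = \min_h cost_{I_p}(h)$ is a fixed rational and $N$ is the scaling factor below. I will show $\min_h cost_J(h) = N\cdot m_p + \min_h cost_I(h)$, so that $I$ has a solution of cost $\le t$ iff $J$ has one of cost $\le t'$.

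For the upper bound on $\min_h cost_J(h)$, take an optimal solution $h'$ of $I$ and extend it by the identity assignment $x_a\mapsto a$ on the gadget; on the identity the fixed positions of each $\rho$ carry exactly the constants $t_f(i)$, so that constraint evaluates to $f(h'(\sigma))$ and the total cost is $N\cdot m_p + \min_h cost_I(h)$. The lower bound is the heart of the argument. First, choosing $N$ large enough forces the gadget part of any optimal $h$ of $J$ to be an optimal solution of $I_p$, hence (by Lemma~\ref{lem:perm_vcsp}(1)) a permutation $s_h$ of $D$: if the gadget were suboptimal it would cost at least $N(m_p+\varepsilon)$, where $\varepsilon>0$ is the fixed gap between the optimum and the next-best value of $I_p$, and this would exceed the cost of the identity-gadget solution once $N\varepsilon$ dominates the total possible variation of the non-gadget constraints. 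That variation is bounded by $2R\sum_{c}|q_c|$, where $R$ is the (constant) maximum of $|g(\nu)|$ over all $g\in\Gamma$ and tuples $\nu$; so any $N$ with $N>2R\varepsilon^{-1}\sum_c|q_c|$ works.

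Given that the gadget part of an optimal $h$ realises a permutation $s_h$, I use Lemma~\ref{lem:perm_vcsp}(2) to rotate $h$ to the identity. The set $G=\{s_{h_0}\mid h_0\text{ optimal for }I_p\}$ contains the identity and, by part~(2) applied to $I'=I_p$, is closed under composition; being a finite submonoid of the symmetric group on $D$, it is a subgroup, so $s_h^{-1}\in G$. Applying part~(2) with $I'=J$ then shows that $s_h^{-1}\circ h$ is again an optimal solution of $J$, and by construction $(s_h^{-1}\circ h)(x_a)=a$, i.e.\ its gadget part is the identity. Hence its non-gadget constraints again evaluate the original functions $f$, so its cost equals $N\cdot m_p + cost_I((s_h^{-1}\circ h)|_V)\ge N\cdot m_p + \min_h cost_I(h)$. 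This gives the matching lower bound and establishes the cost identity.

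It remains to observe that the whole construction is an $\FPC$-interpretation. The universe, the relations $R_g$ and the weight relations are defined essentially as in Lemma~\ref{lem:expower_to_gamma}: the fixed instance $I_p$ and the finitely many data $(\gamma(f),s_f,T_f,t_f)$ are hard-wired into the formulas using number constants, the new constraints are indexed by $C$, and the bit-position sort is enlarged (e.g.\ to pairs from the universe) to accommodate the scaled weights. The one genuinely new ingredient is the scaling factor $N$ and the shifted threshold $t'$: since $R$ and $\varepsilon$ are constants depending only on $\Gamma$ and $I_p$, and $\sum_c|q_c|$ together with the product by $N$ and the sum $t+N\cdot m_p$ are all expressible with the arithmetic on number and weight representations available in $\FPC$, both $N$ and $t'$ are $\FPC$-definable. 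The main obstacle is the lower-bound argument, which hinges on combining the scaling estimate with the group structure of $G$ and Lemma~\ref{lem:perm_vcsp}(2); the $\FPC$-definability itself is routine apart from care in defining $N$ and in sizing the bit-position sort.
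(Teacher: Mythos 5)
Your proposal is correct and follows essentially the same route as the paper's proof: the same gadget construction with $I_p$ from Lemma~\ref{lem:perm_vcsp}, the same replacement of each $\Gamma_c$-constraint by its $\gamma(f)$-counterpart with gadget variables at the fixed positions, and the same rotation of an optimal solution by $s_h^{-1}$ (obtained, as in the paper, by iterating part~(2) of Lemma~\ref{lem:perm_vcsp}, which you phrase as closure of a finite permutation monoid into a group). Your choice of the scaling factor via the explicit optimality gap $\varepsilon$ of $I_p$ and your explicit threshold shift $t'=t+N\cdot m_p$ are slightly more careful than the paper's, but they do not change the argument.
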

	\begin{proof}		
		Let $I_c=(V_c,C_c)$ be an instance of $\VCSP(\Gamma_c)$, and let
		$I_p=(V_p,C_p)$ be an instance of $\VCSP(\Gamma)$ that satisfies the
		conditions of Lemma~\ref{lem:perm_vcsp}. We construct an instance
		$I=(V,C)$ of $\VCSP(\Gamma)$ as follows.  The set of variables
		$V$ is
		\[V \defeq V_c \dunion V_p = V_c \dunion \{x_a\mid a\in D\}.\]
		By Definition~\ref{def:gamma_c}, each function $f\in \Gamma_c$
		is associated with some function $g=\gamma(f)\in\Gamma$, such that $f$ is obtained
		from $g$ by fixing the values of some set of variables of $g$. Let $T_{f}$
		be the corresponding index set, 
		$t_{f}:T_{f}\rightarrow D$
		the corresponding partial assignment of variables of $g$, and $s_{f}$
		the injective mapping between parameter positions of $f$ and $g$.
		Then, we add for each constraint $c'=(\sigma', f, q)\in C_c$ the constraint
		$c=(\sigma, g, q)$ to $C$, where we replace each parameter of $g$ that is fixed to
		$a\in D$ by the variable $x_a$, or formally, $\sigma_i=x_{t_{f}(i)}$ if $i\in T_{f}$,
		and $\sigma_i = \sigma'_{s_{f}^{-1}(i)}$ otherwise. Additionally, we 
		add each constraint of $C_p$ to $C$ with its weight multiplied
		by some sufficiently large factor $M$ such that every
		optimal solution to $I$, when restricted to $\{x_a\mid a\in D\}$, constitutes
		also an optimal solution to $I_p$. For instance, $M$ can be chosen as 
		$M:=\sum_{(\sigma,g,q)\in C\backslash C_p}q\cdot \max_{f\in\Gamma_c;\tup{x}}f(\tup{x})$.
		Note that since the domain and the constraint language are finite, 
		and the functions are finite valued,
		the value of $\max_{f\in\Gamma_c;\tup{x}}f(\tup{x})$ exists and is a constant. Together,
		the set of constraints $C$ is defined as
		\begin{align*}
			C =& \{(\sigma,g,q)\mid \exists \sigma',f: g=\gamma(f),(\sigma',f,q)\in C_c,\ 
			\forall i\in T_{f}: \sigma_i = t_{f}(i),\ 
			\forall i\notin T_{f}: \sigma_i = \sigma'_{s_{f}^{-1}(i)}\}\\
			&\cup \{(\sigma,g,M\cdot q)\mid (\sigma,g,q)\in C_p\}.
		\end{align*}
		
		In order to see that this construction is a reduction, consider the optimal
		solutions of $I_c$. Each such optimal solution $h_c$ gives rise to an optimal
		solution $h$ of $I$, where $h(x)=h_c(x)$ for $x\in V_c$, and $h(x)=h_{id}(x)$
		for $x\in V_p$. In the other direction, let $h$ be an optimal solution to
		$I$, and its restriction to $V_p$, $h_p:=h_{|V_p}$ is an optimal solution to $I_p$. 
		By Lemma \ref{lem:perm_vcsp}, the operation $s_{h_p}$ is a permutation on $D$,
		and in particular, by repeatedly applying the second part of Lemma \ref{lem:perm_vcsp},
		the inverse permutation $s_{h_p}^{-1}$ is an optimal solution to $I_p$
		as well. Now, again by application of the second part of Lemma \ref{lem:perm_vcsp},
		we can obtain an optimal solution $h':=s_{h_p}^{-1}\circ h$ to $I$, for which
		$h'(x_a)=a$ for each $a\in D$. That means, the restriction of $h'$ to $V_c$
		is an optimal solution to $I_c$.
		
		We now formulate the above construction as an $\FPC$ interpretation.
		
		Let $I_c$ be given as a structure 
		$\struct I_c$ over $\tau_{\Gamma_c}=(<,(R_f)_{\Gamma_c},W_N,W_D)$. Furthermore,
		let $I_p=(V_p,C_p)$ be some fixed instance of $\VCSP(\Gamma)$
		that satisfies the conditions of Lemma~\ref{lem:perm_vcsp}.
		We construct an $\FPC$-interpretation 
		$\Theta=(\vec{\delta},\varepsilon,\phi_{<},(\phi_{R_f})_{f\in\Gamma},\phi_{W_N},\phi_{W_D})$
		that defines $\struct I=\Theta(\struct I_c)$. The universe
		$\univ{I_c}$ is the three-sorted set $V_c\dunion C_c\dunion B_c$.  In the same way, the
		universe of the structure $\struct I$ is  a three sorted
		set $V\dunion C\dunion B$. Just as in the proof of 
		Lemma~\ref{lem:expower_to_gamma}, to code elements of $V_p$ and $C_p$,
		we fix bijections $\var:V_p \ra \{1,\ldots,|V_p|\}$ and $\con:C_p \ra \{1,\ldots,|C_p|\}$
		
		The set $V$ is then defined by the formula
		\[\delta_V(x)=x\in V_c \lor  x \in \{1,\ldots,|V_p|\}.\]
		
		Similarly, we define $C$ by
		\[\delta_C(x)=x\in C_c \lor  x \in \{1,\ldots,|C_p|\}.\]
		
		The set of bit positions is chosen to be large enough to encode all weights.
		We can choose $B=B_c^2$.
		\[\delta_B(x_1,x_2)=x_1,x_2\in B_c,\]
and let $\phi_{<}$ define the lexicographic order on $B_C^2$.

		For each $m$-ary function $g\in\Gamma$, we have the formula
		\begin{align*}
			\phi_{R_g}(\vec{x},c)= &\bigvee_{e=(\rho,g,r)\in C_p} 
				\left(c=con(e) \land \bigwedge_{1\leq i\leq m}x_i=var(\rho_i) \right)\\
				&\bigvee_{f: \gamma(f) = g}\left(
				\exists \vec{y} \in V_c^{\ar(f)}: R_f(\bar{y},c) \wedge
				\bigwedge_{i\in T_f}x_i=var(t_f(i))
				\bigwedge_{i\notin T_f} x_i=y_{s^{-1}_f(i)}\right).
		\end{align*}
		
		The weights are given by
		\[\phi_{W_N}(c,\tup{\beta}) = (c\in C_c \wedge W_N(c,\beta) ) \lor 
			 \bigvee_{e=(\rho,g,r)\in C_p} (c=con(e) \wedge \MULT_{r\cdot L}(B_c,\tup \beta)),\]
		where $L$ is given by
		\[L = \max_{f\in\Gamma_c;\tup{x}\in D^{ar(f)}}f(\tup x).\]

		The denominator is given by
		\[\phi_{W_D}(c,\tup \beta)=(c\in C_c \wedge W_D(c,\beta)) \lor \bigvee_{e\in C_p} 
			(c=con(e) \wedge \BIT(1,\beta)).\]
		Here, another case distinction is in place. Either we have $c\in C_c$,
		and the weight is simply the same as given by $W_N$ and $W_D$. Or, the
		constraint $c$ corresponds to some constraint $e=(\rho,g,r)\in C_p$, and
		we assign the weight $L\cdot 2^{|B_c|}\cdot r$ to $c$.
	\end{proof}

\section{Expressibility Result}\label{s:exp}

	The fact that $\VCSP(\Gamma)$ is definable in $\FPC$ whenever $\Gamma_c$ does not have the $(\XOR)$ property is obtained quite directly from Theorems~\ref{thm:dichotomy} and~\ref{thm:lpdefinable}.  Here we state the result in somewhat more general form. 

	\begin{theorem}\label{thm:exp}
For any valued constraint language  $\Gamma$ over a finite domain $D$, there is an $\FPC$ interpretation $\Theta$ of $\tau_{\QQ}$ in $\tau_{\Gamma}$ that takes an instance $I$ to a representation of the optimal value of $\BLP(I)$.
	\end{theorem}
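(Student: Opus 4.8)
The plan is to factor the interpretation $\Theta$ through the linear programming machinery of Theorem~\ref{thm:lpdefinable}. I would first construct an $\FPC$ interpretation $\Theta_{1}$ of $\tau_{\LP}$ in $\tau_{\Gamma}$ that sends the structure $\struct I$ encoding an instance $I=(V,C)$ of $\VCSP(\Gamma)$ to the explicit relational representation of the linear program $\BLP(I)$, put into the standard form $A\bar x\le\bar b$ of Definition~\ref{def:lin-opt}. Applying Theorem~\ref{thm:lpdefinable} then yields an $\FPC$ interpretation $\Theta_{2}$ that defines the optimal solution $\bar v$ of this program, and a final $\FPC$ interpretation $\Theta_{3}$ computes the objective value $\bar c^{T}\bar v$ and outputs it over $\tau_{\QQ}$. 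Since $\FPC$ interpretations compose, the composite is the required interpretation.

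The substance of the argument is $\Theta_{1}$, and the guiding observation, exactly as in Lemmas~\ref{lem:expower_to_gamma} and~\ref{lem:gammac_to_gamma}, is that because $D$ and $\Gamma$ are finite the arities $\ar(f)$ are bounded and each $D^{\ar(f)}$ has fixed finite size; hence every index ranging over a tuple $\nu\in D^{\ar(f)}$, a value $a\in D$, or a coordinate $1\le i\le\ar(f)$ can be carried by a number term inside a finite disjunction, while the indices ranging over $V$ and $C$ use genuine elements of $\struct I$. Concretely, the sort of $\LP$-variables is defined as pairs: the $\mu$-variables as pairs $(x,k)$ with $x\in V$ and $k$ a number coding some $a\in D$, and the $\lambda$-variables as pairs $(c,j)$ with $c\in C$ and $j$ a number coding a tuple $\nu\in D^{\ar(f)}$, where $f$ is read off from the relation $R_{f}$ containing $c$. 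The sort of $\LP$-constraints is defined analogously from the marginal equations~\eqref{eqn:constr}, the normalisation equations $\sum_{a}\mu_{x,a}=1$, and the bounds $\lambda_{c,\nu},\mu_{x,a}\ge 0$, with each equality split into two inequalities by an additional sign index. All entries of $A$ lie in $\{-1,0,1\}$ and all entries of $\bar b$ are $0$ or $1$, so the formulas $\phi_{R}$ defining the matrix and right-hand side, and the congruence $\varepsilon$, are immediate.

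The only entry of the program that actually depends on the numerical input is the objective vector $\bar c$: the coefficient of $\lambda_{c,\nu}$ is $q\cdot f(\nu)$, where $q$ is the weight of $c$ supplied by $W_{N},W_{D}$ and $f(\nu)$ is a fixed rational $p/r$ determined by $f$ and $\nu$. I would obtain the numerator and denominator of $q\cdot f(\nu)$ from those of $q$ using $\MULT_{p}$ and $\MULT_{r}$, carrying the sign of $q$ unchanged, exactly as weights were rescaled in the proofs of Lemmas~\ref{lem:expower_to_gamma} and~\ref{lem:gammac_to_gamma}; a bit sort of size a fixed multiple of $|B|$ holds the results, and $\bar c$ is negated so that the minimisation becomes the maximisation of Theorem~\ref{thm:lpdefinable}. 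For $\Theta_{3}$, note that $\BLP(I)$ is always feasible—for instance $\mu_{x,a}=1/|D|$ and $\lambda_{c,\nu}=|D|^{-\ar(f)}$ satisfy all constraints—and bounded, since every variable lies in $[0,1]$ and the sum is over finitely many terms with finite coefficients; thus Theorem~\ref{thm:lpdefinable} returns the flag $f=0$ together with a genuine optimum $\bar v$. Taking $\Theta_{2}$ to output $\bar v$ alongside a copy of $\bar c$, the optimal value $\bar c^{T}\bar v=\sum_{c,\nu}\lambda_{c,\nu}\,q\,f(\nu)$ is an iterated sum of products of rationals, which \cite{adh13:lics} shows $\FPC$ can form, and which we output as a $\tau_{\QQ}$-structure.

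The hard part is not any single idea but getting $\Theta_{1}$ exactly right: one must lay out the multi-sorted index universe and the sign-doubling of equalities so that $\varepsilon$ is a genuine congruence and the result is a faithful $\tau_{\LP}$-encoding of $\BLP(I)$, and one must check that the finitely many constants $f(\nu)$ and the bit-length bounds are handled uniformly over all instances (with the finitely many instances too small to index by number terms treated separately, as in Lemma~\ref{lem:expower_to_gamma}). All of this is bookkeeping in the style already developed in Section~\ref{sec:reductions}, so beyond Theorem~\ref{thm:lpdefinable} and the arithmetic gadgets $\MULT_{c}$ no genuinely new technique is needed.
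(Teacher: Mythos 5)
Your proposal follows essentially the same route as the paper's proof: define an $\FPC$ interpretation taking $\struct I$ to a $\tau_{\LP}$-representation of $\BLP(I)$ (coding the $\lambda$- and $\mu$-variables as element--number pairs via finite disjunctions over $D$ and $\ar(f)$, and splitting each equality into two inequalities), then compose with Theorem~\ref{thm:lpdefinable}. If anything you are more explicit than the paper on two points it leaves implicit --- forming the objective coefficients $q\cdot f(\nu)$ via $\MULT$, and noting that $\BLP(I)$ is always feasible and bounded so that the final evaluation of $\bar c^T\bar v$ goes through --- so the proposal is correct and matches the paper's argument.
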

\begin{proof}
	We show that it is possible to interpret $\BLP(I)$ as a $\tau_{LP}$-structure in $I$ by means of an $\FPC$-interpretation.  The statement then follows by Theorem \ref{thm:lpdefinable} and the composition of $\FPC$-reductions.
	
	Let $I=(V,C)$ be given as the $\tau_\Gamma$ structure 
	$\struct{I}$ with universe $\univ I=V\dunion C\dunion B$. Our goal is to define a 
	$\tau_{LP}$-structure $\struct{P}$ representing $\BLP(I)$ in
	given by $(A,\tup b, \tup c)$.
	The set of variables of $\struct{P}$ is the union of the two sets 
	\[\lambda=\{\lambda_{c,\nu}\mid c=(\sigma,f,q)\in C, \nu\in D^{|\sigma|}\}\]
	and
	\[\mu=\{\mu_{x,a}\mid x\in V, a\in D\}.\] 
In order to refer to elements 
		of $D$ in our interpretation, we fix a bijection $\dm: D \ra \{1,\ldots,|D|\}$ between $D$ and an initial segment of the natural numbers. 
	
	Then, the sets $\lambda$ and $\mu$ are defined by
	\[\lambda(c,\vec{\nu})=\bigvee_{f\in\Gamma}\left(\exists \vec{y} \in V^{\ar(f)}:R_f(\vec{y},c) \land 
		\bigwedge_{1\leq i\leq ar(f)} \bigvee_{a\in D}\nu_i=\dm(a)\right).\]
Here, we assume that $\vec{\nu}$ is a tuple of number variables of length $\max_{f \in \Gamma} \ar (f)$.  This creates some redundant variables, related to constraints whose arity is less than the maximum.  We also have
	\[\mu(x,\alpha)=x\in V \wedge  \bigvee_{a\in D}y=\dm(a).\]

	For the set of linear constraints, we observe that the constraints resulting
	from the equalities of the form $(2)$ can be indexed by the set
	\[J_{(2)}=\{j_{c,i,a,b}\mid c=(\sigma,f,q)\in C, i\in \{1,\ldots,|\sigma|\},a\in D, b\in \{0,1\}\},\]
	since we have for each $c\in C$, $i\in\{1,\ldots,|\sigma|\}$, and $a\in D$
	a single equality, and hence two inequalities, one for each value of $b$.
	This can be expressed by
	\begin{align*}
		J_{(2)}(c,\iota,\alpha,\beta) =& c\in C \wedge \bigvee_{f\in\Gamma}\exists \vec{y} \in V^{ar(f)}: R_f(\vec{y},c)\\ 
		&\wedge\iota \leq ar(f)\\
		&\wedge  \bigvee_{a\in D}\alpha=\dm(a)\\
		&\wedge \beta \in \{0,1\}.
	\end{align*}
	
	Similarly, the constraints resulting from $(3)$ can be indexed by 
	\[J_{(3)}=\{j_{x,b}\mid x\in V, b\in\{0,1\}\}.\] Or, as a formula,
	\[J_{(3)}(x,\beta) = x\in V \wedge \beta \in\{0,1\}.\]
	Finally, we have two inequalities bounding the range of each variable,
	indexed by
	\[J_{(4)}=\{j_{v,b}\mid v\in \lambda\cup\mu, b\in\{0,1\}\},\] defined by
	\[J_{(4)}(\tup v,\beta)=\lambda(\tup{v})\vee \mu(\tup{v}) \wedge \beta\in\{0,1\}.\] 
	
	The universe $\univ{L}$ is then the three-sorted set
	$Q\dunion R\dunion B'$ with index sets $Q$ and $R$ for columns and rows
	respectively, and a domain for bit positions $B'$, defined by
	\[\delta_Q(\tup x)=\lambda(\tup x)\vee \mu(\tup x),\]
	\[\delta_R(\tup x)=J_{(2)}(\tup x)\vee J_{(3)}(\tup x)\vee J_{(4)}(\tup x),\]
	\[\delta_{B'}(x)=x\in B.\]
	
	The entries in the matrix $A\in\mathbb{Q}^{Q\times R}$, and the 
	two vectors $\tup b\in\mathbb{Q}^{Q}$ and $\tup c\in\mathbb{Q}^{R}$ consist
	only of elements of $\{0,1,-1\}$ and the weight of some constraint in $C$. 
	It is easily seen that these can be suitably defined in $\FPC$.
\end{proof}

Combining this with Theorem~\ref{thm:dichotomy} yields immediately the positive half of the definability dichotomy.
\begin{corollary}\label{cor:exp}
If $\Gamma$ is a valued constraint language such that property $(\XOR)$ does not hold for $\Gamma_c$, then $\VCSP(\Gamma)$ is definable in $\FPC$.
\end{corollary}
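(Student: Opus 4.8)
The plan is to read the corollary off by combining the two ingredients already established, with only a small amount of bookkeeping. Theorem~\ref{thm:exp} already provides, for any finite-domain language $\Gamma$, an $\FPC$ interpretation $\Theta$ of $\tau_{\QQ}$ in $\tau_{\Gamma}$ that sends an instance $I$ to a representation of the optimal value of $\BLP(I)$. On the other hand, the notation $\Gamma_c$ of Definition~\ref{def:gamma_c} is only introduced for a core, so the hypothesis that $(\XOR)$ fails for $\Gamma_c$ presupposes that $\Gamma$ is a core; I would therefore first treat $\Gamma$ as a core (passing to one if necessary, as described below). The whole task is then to arrange the hypotheses of Theorem~\ref{thm:dichotomy} and to observe that the interpretation $\Theta$ computes the quantity we want.

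Concretely, with $\Gamma$ a core for which $(\XOR)$ fails for $\Gamma_c$, Theorem~\ref{thm:dichotomy} places us in its first branch: for every instance $I$ of $\VCSP(\Gamma)$ the optimal solutions of $I$ coincide with those of $\BLP(I)$, and in particular the two problems share the same optimal \emph{value}. Hence the interpretation $\Theta$ of Theorem~\ref{thm:exp}, which computes the optimal value of $\BLP(I)$, in fact computes the optimal value of $I$ itself. By the definition of definability for valued CSPs, this is exactly an $\FPC$ interpretation of $\tau_{\QQ}$ in $\tau_{\Gamma}$ witnessing that $\VCSP(\Gamma)$ is definable in $\FPC$, which is the claim.

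To read the corollary for an arbitrary $\Gamma$, the notation $\Gamma_c$ is understood via a core $\Gamma'$ of $\Gamma$, and the argument simply localises there: I would apply the above to $\Gamma'$ to obtain an $\FPC$ value-interpretation for $\VCSP(\Gamma')$, and then transfer it back using Lemma~\ref{lem:gamma_to_core}, whose underlying $\FO$ interpretation preserves the optimal value by Lemma~\ref{lem:core_equiv}; composing with it, and using that $\FPC$ is closed under composition with interpretations, yields the required value-interpretation for $\Gamma$. I expect no genuine obstacle here, since essentially all of the content lives in Theorems~\ref{thm:exp} and~\ref{thm:dichotomy}, and the corollary only records the consequence that in the $(\XOR)$-free case the BLP relaxation is exact, so that its $\FPC$-definable optimum is the VCSP optimum. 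The one point to keep straight throughout is that ``definable'' must be read as definability of the optimal \emph{value} via an interpretation into $\tau_{\QQ}$, rather than of the associated decision problem, so that the output format of Theorem~\ref{thm:exp} matches the claim verbatim.
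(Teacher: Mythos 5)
Your proposal is correct and matches the paper's own (one-line) proof: the corollary is obtained exactly by combining Theorem~\ref{thm:exp} with the first branch of Theorem~\ref{thm:dichotomy}, so that the $\FPC$-definable optimum of $\BLP(I)$ coincides with the optimum of $I$. Your additional bookkeeping for non-core $\Gamma$ via Lemma~\ref{lem:gamma_to_core} and Lemma~\ref{lem:core_equiv} is a sound way to handle a detail the paper leaves implicit, and does not change the approach.
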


\section{Inexpressibility Result}\label{s:inexp}
	We now turn to the other direction and show that if
	$\VCSP(\Gamma)$ is such that $\Gamma_c$ has the $(\XOR)$ property then $\VCSP(\Gamma)$ is not definable in $\FPC$.  In fact, we will prove the stronger inexpressibility result
	that those $\VCSP$s are not even definable in the stronger logic
	$\infC$.
	
	Our proof proceeds as follows. The main result in \cite{tz13:stoc} characterizes the
	intractable constraint languages $\Gamma$ as exactly those languages
	whose extension $\Gamma_c$ has the property $(\XOR)$,
	by constructing a polynomial time reduction from $\MAXCUT$ to
	$\VCSP(\Gamma)$.  We show that this reduction can also be carried out 
	within $\FPC$. It is then left to show that $\MAXCUT$ itself is not
	definable in $\infC$. To this end, we describe a series of $\FPC$-reductions
	from 3-$\SAT$ to $\MAXCUT$ which roughly follow their classical
	polynomial time counterparts.  Finally, results of \cite{bjk05:joc}
	and \cite{abd09:tcs} establish that 3-$\SAT$ is not definable
	in $\infC$, concluding the proof.
	
	We consider the problem $\MAXCUT$, where one is given an undirected graph
	$G=(V,E)$ along with a weight function $w:E\rightarrow \QQ^+$ and
	is looking for a bipartition of vertices $p:V\rightarrow \{0,1\}$ that
	maximises the payout function $b(p)=\sum_{(u,v)\in E; p(u)\neq p(v)}w(u,v)$.
	In the decision version of the problem, an additional constant $t\in\QQ^+$
	is given and the question is then whether there is a partition $p$ with
	$b(p)\geq t$.
	
		An instance of (decision) $\MAXCUT$ is given as a relational structure
		$\struct I$ over the vocabulary
		$\tau_\MAXCUT=(E,<,W_N,W_D,T_N,T_D)$. The universe $\univ I$
		is a two-sorted set $U=V\dunion B$,
		consisting of vertices $V$, and a set $B$ of bit positions, linearly ordered by $<$.  In
		addition to the edge relation $E\subseteq V\times V$, there are two
		weight relations $W_N,W_D\subseteq V\times V\times B$ which encode the
		numerator and denominator of the weight between two vertices. Finally,
		the unary relations $T_N,T_D\subseteq B$ encode the numerator and denominator of the
		threshold constant of the instance. 
		
	\begin{lemma}\label{lem:maxcut_to_gamma}
		Let $\Gamma$ be a language over $D$ for which $(\XOR)$ holds. Then,
		$\MAXCUT\leq_{\FPC}\VCSP(\expGamma_{\equiv})$.
	\end{lemma}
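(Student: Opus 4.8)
The plan is to use the binary function witnessing the $(\XOR)$ property to penalise monochromatic edges, and to repair the asymmetry between the two monochromatic colour patterns by means of a complementation gadget.

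First I would invoke the $(\XOR)$ property to fix distinct $a,b\in D$ and a binary $f\in\expGamma$ with $\argmin f=\{(a,b),(b,a)\}$. Since the target is $\VCSP(\expGamma_{\equiv})$, I may replace $f$ by a translate, which still lies in $\expGamma_{\equiv}$, and so assume $f\ge 0$ with $f(a,b)=f(b,a)=0$. Write $p:=f(a,a)>0$, $q:=f(b,b)>0$, and let $\epsilon:=\min\{\,f(x,y)\mid (x,y)\notin\{(a,b),(b,a)\}\,\}>0$ be the gap to the second-smallest value; all of $a,b,p,q,\epsilon$ are fixed rationals depending only on $\Gamma$ and $f$. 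Every produced instance uses only this single function, so it lies in $\VCSP(\{f\})\subseteq\VCSP(\expGamma_{\equiv})$ and can be coded in a finite vocabulary containing $R_f$. Given a $\MAXCUT$ instance $G=(V,E)$ with weights $w$ and threshold $t$, the constructed instance $J$ has two variables $x_v,\bar x_v$ for each $v\in V$; for each $v$ I add a \emph{forcing} constraint $f(x_v,\bar x_v)$ of a large weight $M$ fixed below, and for each edge $(u,v)\in E$ of weight $w_{uv}$ I add the two constraints $f(x_u,x_v)$ and $f(\bar x_u,\bar x_v)$, each of weight $w_{uv}$.

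For correctness, note that a forcing constraint attains its minimum $0$ exactly when $\{x_v,\bar x_v\}=\{a,b\}$ and otherwise costs at least $\epsilon$. Because all weights are non-negative, as soon as $M\epsilon>(p+q)W$, where $W:=\sum_e w_e$, every optimal solution must set all forcing constraints to $0$ and is thus a \emph{proper colouring}: each $x_v\in\{a,b\}$ with $\bar x_v$ its complement. On such a solution the two edge constraints of $(u,v)$ contribute $0$ when the edge is cut (both $f(x_u,x_v)$ and $f(\bar x_u,\bar x_v)$ vanish), and $f(a,a)+f(b,b)=p+q$ when it is monochromatic, \emph{independently of which side the endpoints lie on} — this cancellation of the asymmetry $p\neq q$ is exactly the purpose of the complemented copy. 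Hence the optimum of $J$ equals $(p+q)(W-C_{\max})$, where $C_{\max}$ is the maximum cut weight, so $G$ has a cut of weight $\ge t$ if and only if $J$ has a solution of cost $\le t':=(p+q)(W-t)$; I set the threshold of $J$, encoded in $T_N,T_D$, to $t'$.

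It remains to realise this construction as an $\FPC$-interpretation. The variable sort is defined over $V$ together with a Boolean flag (a number variable in $\{0,1\}$) distinguishing $x_v$ from $\bar x_v$; the constraint sort is defined over $V$ (forcing) and over $E$ with a flag (the two edge copies); the bit sort is taken to be a sufficiently large power $B^k$ of the input bit sort, ordered lexicographically by $\phi_{<}$, so as to hold $M$ and $t'$. The formula $\phi_{R_f}$ simply lists, for each constraint, its two argument variables as above. Edge-constraint weights are copied from $W_N,W_D$ with the bit positions remapped into $B^k$, while the forcing weight $M$ and the threshold $t'$ are obtained from the input threshold and from $W=\sum_e w_e$ by $\FPC$-definable rational arithmetic, using $\MULT_{c}$ for the fixed multipliers $p+q$ and the constant $K:=(p+q+1)/\epsilon$; taking $M:=K\cdot W$ guarantees $M\epsilon>(p+q)W$ whenever $W>0$, and the degenerate case $W=0$ is handled by the same formula since then both sides reduce to $0$.

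The conceptual core of the argument is the complementation gadget symmetrising the two monochromatic costs. The main technical obstacle is the last step: computing $W$, and hence $M$ and $t'$, inside $\FPC$. This requires summing a definable multiset of rationals indexed by the (unordered) edge set and performing rational arithmetic on binary-encoded numbers over polynomially many bit positions. Such iterated rational arithmetic is precisely the machinery underlying Theorem~\ref{thm:lpdefinable}, so it is available in $\FPC$; the only point needing care is to choose the bit sort $B^k$ wide enough to accommodate the common denominator arising in the summation, which is of size polynomial in the input.
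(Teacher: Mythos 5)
Your argument is correct, but it takes a genuinely different route from the paper's. The paper's proof first \emph{normalises} the $(\XOR)$ witness: it asserts that $\expGamma_{\equiv}$ contains a binary $f$ that takes its minimum value exactly on $\{(a,b),(b,a)\}$ and a single common value everywhere else, and then uses the direct reduction --- one variable per vertex, one constraint $((u,v),f,w(e))$ per edge, threshold $M-t$ with $M=\sum_{e}w(e)$ --- with no gadget, since every uncut configuration then costs the same. You instead work only with the literal $\argmin$ condition (after a translation making $\min f=0$, which is all that membership in $\expGamma_{\equiv}$ straightforwardly provides) and repair the possible asymmetry $f(a,a)\neq f(b,b)$, as well as the presence of other domain values, by the complementation gadget: the forcing constraints confine optimal solutions to proper $\{a,b\}$-colourings, and the paired edge constraints symmetrise the monochromatic cost to $p+q$. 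Your version is the more self-contained of the two, since scaling and translation alone cannot turn an arbitrary $\argmin$-witness into the two-valued function the paper's one-line reduction needs; the paper is implicitly leaning on the stronger normalisation carried out in~\cite{tz13:stoc}, which your gadget renders unnecessary. The price you pay is a doubled variable set and an instance-dependent forcing weight $M=K\cdot W$, which requires summing the edge weights inside $\FPC$; this is unproblematic (the paper's own interpretation already defines the threshold $M-t$ by such a summation over the unordered edge set), but it is the one point where your interpretation is more delicate than the paper's. One cosmetic remark: since the edge relation is symmetric and $f$ need not be, your interpretation will naturally emit constraints for both orientations of each edge; this only rescales the optimum by a factor of $2$ and is absorbed into the threshold.
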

	\begin{proof}

		Let $I=(V,E,w,t)$ be a given $\MAXCUT$ instance. 
		We define an equivalent instance 
		$J=(U,C,t')$
		of $\VCSP(\Gamma_{\equiv})$ as follows. 
		Since $(\XOR)$ holds for $\Gamma$, there are two distinct elements 
		$a,b\in D$ for which 
		$\expGamma_{\equiv}$ contains a binary function $f$, such that $f(a,b)=1$ if
		$a=b$ and $f(a,b)=0$ otherwise.  By creating a variable for each
		vertex in $V$ and adding a constraint $((u,v),f,w(e))$ 
		for each edge $e=(u,v)\in E$, 
		we obtain a $\VCSP$ with the same optimal solution. 
		The threshold constant $t'$ is then set to $t'=M-t$, 
		where $M:=\sum_{e\in E}w(e)$.

		We now define a $\FPC$-interpretation $\Theta$ of $\tau_{\expGamma_\equiv}$ 
		in $\tau_{\MAXCUT}$ that carries out the construction. Let $\struct{I}$
		be the relational representation of $I$ over $\tau_{\MAXCUT}$
		with the two-sorted universe $V\dunion B$.
		
		The structure $\struct J = \Theta (\struct I)$ has a three-sorted universe $\univ{J}=U\dunion C\dunion B'$ consisting 
		of variables $U=V$, constraints $C=V^2$, and bit
		positions $B'=B \times \{1,\ldots,|E|\}$.
		\[\delta_U(x) = x\in V,\]
		\[\delta_C(x_1,x_2)=x_1,x_2\in V,\]
		\[\delta_{B'}(x,\mu)=x\in B \land \mu \leq \#_{y,z} E(y,z).\]
Since $M \leq |E|\max_{e\in E} w(e)$, and each $w(e)$ can be represented by $|B|$ bits, $|E|\cdot |B|$ bits suffice to represent the threshold $M-t$.

		Each edge $e=(u,v)$ gives rise to a constraint $((u,v),e,w(e))$, which is
		then encoded in $R_f$.
		\[\phi_{R_f}(\tup x,\tup c) = E(\tup x) \wedge \bar{x}=\bar{c}.\]
		The weights are simply carried over.
		\[\phi_{W_N}(\bar{c},b) = W_N(\tup c,b)\]
		\[\phi_{W_D}(\bar{c},b) = W_D(\tup c,b)\]
		The threshold is set to $M-t$.  As $\FPC$ can define any polynomial-time computable function on an ordered domain, it is possible to write formulas $\phi_{T_N}$ and $\phi_{T_D}$ defining the numerator and denominator of the threshold $M-t$ on the ordered sort $B'$.

                The remaining relations $R_g$ corresponding to
		functions in $g\in\expGamma_\equiv\backslash\{f\}$ are simply empty.
	\end{proof}

	The next ingredient is to show that the classical series
	of polynomial time reductions from 
	$3$-$\SAT$ to $\MAXCUT$ can also be carried out within $\FPC$.
	The chain of reductions goes over three steps, the first one
	reduces $3$-$\SAT$ to $4$-$\NAESAT$ (Not All Equal SAT), then $4$-$\NAESAT$
	is reduced to $3$-$\NAESAT$, and finally $3$-$\NAESAT$ is reduced to
	$\MAXCUT$. We begin with defining the relational representations of
	these problems.
	
	An instance of $3$-$\SAT$ is given as a relational structure 
	over the vocabulary $\tau_{3\SAT}=(R_{000},\ldots,R_{111})$
	with eight ternary relations, one for each possible set
	of negations of literals within a clause (e.g.\ $(a,b,c)\in R_{000}$ 
	may represent the clause $(a\vee b\vee c)$ while $(a,b,c)\in R_{101}$
	may represent $(\neg a\vee b\vee \neg c)$). Similarly, we assume
	$3$-$\NAESAT$ instances to be given as structures over 
	$\tau_{3\NAESAT}=(N_{000},\ldots,N_{111})$, where
	$(a,b,c)\in N_{000}$ represents the constraint that not all of $a,b$ and $c$
	must evaluate to the same value. Finally, a $4$-$\NAESAT$ instance
	is represented as a structure over $\tau_{4\NAESAT}=(N_{0000},\ldots,N_{1111})$,
	only now with sixteen 4-ary relations encoding the clauses.
	
	\begin{lemma}\label{lem:3sat_to_maxcut}
		$3$-$\SAT\leq_{\FPC}\MAXCUT$.
	\end{lemma}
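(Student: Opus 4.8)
The plan is to factor the reduction through the classical chain $3$-$\SAT \le_{\FPC} 4$-$\NAESAT \le_{\FPC} 3$-$\NAESAT \le_{\FPC} \MAXCUT$ and invoke the fact, noted above, that $\FPC$-interpretations compose. I would establish each of the three links as an interpretation and then read off the composite.

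For the first link I would use the standard blocker-variable trick: introduce a single fresh variable $s$ (obtained by widening the universe of the interpretation by one distinguished element) and map each clause $R_{b_1b_2b_3}(a_1,a_2,a_3)$ to $N_{b_1b_2b_30}(a_1,a_2,a_3,s)$, i.e. $\NAE(\ell_1,\ell_2,\ell_3,s)$ with $s$ occurring positively. Since $\NAE$ is invariant under flipping every variable, a NAE-satisfying assignment may be normalised so that $s$ is false, and then $\NAE(\ell_1,\ell_2,\ell_3,\mathrm{false})$ holds exactly when some $\ell_i$ is true; this matches satisfying assignments of the $3$-$\SAT$ instance with normalised NAE-satisfying assignments of its image. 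Each relation $N_{\cdots}$ is a quantifier-free combination of the $R_{\cdots}$, so this link is an $\FO$-interpretation.

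For the second link I would split each $4$-clause using a fresh auxiliary variable indexed by the clause, so that the variable sort of the image is the disjoint union of the original variables with the set of $4$-$\NAESAT$ clauses (a clause being identified with its tuple together with a tag ranging over the finitely many sign patterns). I map $\NAE(\ell_1,\ell_2,\ell_3,\ell_4)$ to $\NAE(\ell_1,\ell_2,z)\wedge\NAE(\bar z,\ell_3,\ell_4)$, where $z$ is the fresh variable of that clause; a short case analysis shows these two $3$-clauses are jointly NAE-satisfiable for some value of $z$ precisely when the $4$-clause is NAE-satisfiable. In the vocabulary $\tau_{3\NAESAT}$ this is $N_{b_1b_20}(a_1,a_2,z)\wedge N_{1b_3b_4}(z,a_3,a_4)$, again a first-order translation, so this link too is an $\FO$-interpretation.

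The third link is the genuine obstacle and is where counting is needed. I would take two vertices $v^{+},v^{-}$ per variable $v$, joined by an edge of large weight $W$, and for each clause $\NAE(\ell_1,\ell_2,\ell_3)$ add a unit-weight triangle on its three literal-vertices. With $n$ variables and $m$ clauses, setting the threshold to $nW+2m$ works: a cut separating every pair $v^{+},v^{-}$ encodes a consistent truth assignment, a triangle is $2$-cut exactly when its clause is NAE-satisfied, and choosing $W$ larger than the total triangle weight --- e.g. $W=3m+1$ --- forces every variable pair to be separated in any optimal cut, so the instance is NAE-satisfiable iff the maximum cut is at least $nW+2m$. The point requiring care is that this is an $\FPC$- and not merely an $\FO$-reduction: the vertex and literal sorts are unordered, so producing the edge weights (a given edge may serve as a triangle edge for several clauses, its weight being that multiplicity, possibly added to $W$) and the threshold $nW+2m$ requires aggregating via counting terms and performing the arithmetic on the ordered bit-sort, exactly as in Lemma~\ref{lem:maxcut_to_gamma}. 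One must also check that $W=3m+1$ is polynomially bounded, hence representable in the available bit positions, so that the target structure is legitimate. With all three interpretations in hand, their composition is the required $\FPC$-interpretation witnessing $3$-$\SAT\le_{\FPC}\MAXCUT$.
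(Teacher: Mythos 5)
Your proposal is correct and follows essentially the same route as the paper: the same chain $3$-$\SAT\le 4$-$\NAESAT\le 3$-$\NAESAT\le\MAXCUT$ with the blocker-variable trick, per-clause splitting variables realised as tuples over the original universe, and the weighted variable-edge/unit-triangle gadget with threshold $nW+2m$, composed using closure of $\FPC$ under interpretations. The only differences are cosmetic (a single blocker versus one per pair, $W=3m+1$ versus $10m$), and your explicit handling of edge multiplicities is a point the paper glosses over.
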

	\begin{proof}
		$3$-$\SAT\leq_{\FO}4$-$\NAESAT$: Let $\struct I=(V,R_{000},\ldots,R_{111})$ be
	any given $3$-$\SAT$ instance. Consider a $4$-$\NAESAT$ instance
	$\struct{J}=(U,N_{0000},\ldots,N_{1111})$ with $V\subset U$, i.e.\ there is
	at least one variable in $U$ not contained in $V$. Furthermore, let 
	$(a,b,c,z)\in N_{ijk0}$ hold if, and only if, $(a,b,c)\in R_{ijk}$ and 
	$z\in U\backslash V$, and let the relations $N_{ijk1}$ be empty. The instance
	$\struct{J}$ is now satisfiable if, and only if, $\struct{I}$ is satisfiable: Whenever there
	is a satisfying assignment for $\struct{I}$, the same assignment 
	extended with $z=0$ for all $z\in U\backslash V$
	will also be a satisfying assignment for $\struct{J}$. In the other direction, 
	if there is a satisfying assignment for $\struct{J}$, there is always a 
	satisfying one that sets $z=0$ for all $z\in U\backslash V$, since 
	negating every variable does not change the value of a $\NAE$-clause, and
	each clause only contains one variable in $U\backslash V$. In terms of
	a $\FPC$-interpretation, this construction looks as follows.
	
	We take as universe $\univ{J}$ the set $V^2$, and interpret an element $(a,a)$ as
	representing the variable $a\in V$, and any element $(a,b), a\neq b$ as
	a fresh variable in $U\backslash V$.
	\[\delta_U(x_1,x_2) = x_1,x_2\in V \]
	\[\phi_{N_{ijk0}}(\bar{x},\bar{y},\bar{z},\bar{w}) = R_{ijk}(x_1,y_1,z_1)
		\wedge w_1 \neq w_2 \land
		\bigwedge_{\bar{v}\in\{\bar{x},\bar{y},\bar{z}\}} v_1=v_2\]
	\[\phi_{N_{ijk1}}(\bar{x},\bar{y},\bar{z},\bar{w}) = \mathrm{False}\]

	\smallskip
	$4$-$\NAESAT\leq_{\FPC}3$-$\NAESAT$: Let $\struct{I}=(V,N_{0000},\ldots,N_{1111})$
	be an instance of $4$-$\NAESAT$. Note that we can split every clause
	$\NAE(a,b,c,d)$ into two smaller $3$-$\NAESAT$ clauses $\NAE(a,b,z)$ and
	$\NAE(\neg z,c,d)$ for some fresh variable $z$. The following interpretation
	realises this conversion. 
	
	In order to introduce a fresh variable for each 
	clause of the $4$-$\NAESAT$ instance, the universe of the $3$-$\NAESAT$
	instance will consist of tuples from $V^4\times \{0,1\}^5$, where
	the first eight components encode a clause in $\struct I$ and the
	last component is a flag indicating whether the element represents a fresh
	variable or one that appears already in $V$. The convention is then that
	an element of the form $(a,a,a,a,0,\ldots,0)$ represents the variable
	$a\in V$, and an element of the form $(a,b,c,d,i,j,m,n,1)$ represents
	the fresh variable that is used to split the clause $N_{ijmn}(a,b,c,d)$.
	\[\delta(\bar{x}) = \bar{x}\in V^4 \times \{0,1\}^5 \]
	\begin{align*}
		\phi_{N_{ij1}}(\bar{x},\bar{y},\bar{z}) =
		 &\bigvee_{m,n\in\{0,1\}}\exists u,v\in V: N_{ijmn}(x_1,y_1,u,v) \\
		 &\wedge\ x_1=x_2=x_3=x_4 \wedge x_5=\ldots =x_9=0\\
		 &\wedge\ y_1=y_2=y_3=y_4 \wedge y_5=\ldots =y_9=0 \\
		 &\wedge\ \bar{z} = (x_1,y_1,u,v,i,j,m,n,1)
	\end{align*}
	\begin{align*}
		\phi_{N_{0ij}}(\bar{x},\bar{y},\bar{z}) =
		 &\bigvee_{m,n\in\{0,1\}}\exists u,v\in V: N_{mnij}(u,v,y_1,z_1) \\
		 &\wedge\ y_1=y_2=y_3=y_4 \wedge y_5=\ldots =y_9=0 \\
		 &\wedge\ z_1=z_2=z_3=z_4 \wedge z_5=\ldots =z_9=0 \\
		 &\wedge\ \bar{x} = (u,v,y_1,z_1,m,n,i,j,1)
	\end{align*}
	The remaining relations are defined as empty.
	
	\smallskip
	$3$-$\NAESAT\leq_{\FPC}\MAXCUT$: The following construction transforms a
	given $3$-$\NAESAT$ instance $\struct{I}=(V,N_{000},\ldots,N_{111})$ into an
	equivalent (decision) $\MAXCUT$ instance $\struct{J}=(\univ J,E,W_N,W_D,T_N,T_D)$. Let
	$m$ be the number of clauses in $\struct I$, and fix $M:=10m$.
	For each variable $v\in V$, we have two vertices denoted $v_0$ and $v_1$, in our graph, along with an edge $(v_0,v_1)$ of weight $M$.
	For each tuple $(x,y,z)\in N_{ijk}$ we add a triangle
	between the vertices $x_i$, $y_j$, and $z_k$ with 
	edge-weight $1$. Setting the cut threshold to $t:=|V|\cdot M+2m$ gives us 
	an equivalent instance: If $\struct{I}$ is satisfiable, say by an assignment $f$,
	then the partition given by $p(v_i)=f(v)+i\ \mathrm{mod}\ 2$ cuts through every edge of the form $(v_0,v_1)$, and through
	two edges in every triangle, resulting in a payout of $|V|\cdot M+2m$. On the
	other hand, any bipartition of payout larger or equal to $|V|\cdot M+2m$
	has to cut through all edges of the form $(v_0,v_1)$, since it can only
	cut through two edges in each triangle. 
	Hence, any such bipartition induces a satisfying assignment to the
	$3$-$\NAESAT$ instance. We use the following 
	$\FPC$-interpretation to realise this construction.
	
	The universe of $\struct{J}$ is defined as a two-sorted set $\univ J=U\dunion B$,
	consisting of vertices $U=V\times\{0,1\}$ and bit positions $B=\{1,\ldots,\alpha\}$
	for some sufficiently large $\alpha$. In particular, $\alpha$ has to be chosen larger
	than $\log_2 t$. Since $m$ is at most $|V|^3$, taking $\alpha=|V|^4$ suffices.
	\[\delta_U(x_1,x_2)=x_1\in V, x_2\in \{0,1\} \]
	\[\delta_B(\vec{\mu})= \bigwedge_{1\leq i \leq 4} \mu_i \leq \#_v v \in V.\]
	
	The edge relation is given by
	\begin{align*}
		\phi_E(\bar{x},\bar{y}) &= x_1=y_1 \wedge x_2\neq y_2\\
			& \bigvee_{i,j,k\in\{0,1\}}\exists u,v,w\in V: N_{ijk}(u,v,w)
			\wedge \bar{x},\bar{y}\in \{(u,i),(v,j),(w,k)\} .
	\end{align*}
	
	The edge weights and the cut threshold are defined by
	\begin{align*}
		\phi_{W_N}(\bar{x},\bar{y},\beta) &= x_1=y_1 \wedge x_2\neq y_2 \wedge \BIT(1,\beta) \\
			&\vee \BIT\left(10 \cdot \sum_{i,j,k\in\{0,1\}}\#_{u,v,w}N_{ijk}(u,v,w),\beta\right),
	\end{align*}
	\[\phi_{T_N}(\beta) = \BIT\left((2+10\cdot \#_v v\in V) \cdot \sum_{i,j,k\in\{0,1\}}\#_{u,v,w}N_{ijk}(u,v,w),\beta\right),\]
	\begin{align*}
		\phi_{W_D}(\bar{x},\bar{y},\beta) = \BIT(1,\beta),
	\end{align*}
	\[\phi_{T_D}(\beta) = \BIT(1,\beta).\]
	Note that the weights and the cut threshold are integer, 
	hence the denominator relation are simply coding $1$.
	
	\end{proof}
	
	\begin{lemma}\label{lem:3sat}
		$3$-$\SAT$ is not expressible in $\infC$.
	\end{lemma}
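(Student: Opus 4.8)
The plan is to obtain the undefinability of $3$-$\SAT$ in $\infC$ by reducing to it a problem already known to escape counting logic, namely the solvability of systems of linear equations over the two-element field. First I would recall from \cite{abd09:tcs} that the class of satisfiable systems of linear equations modulo $2$ with three variables per equation---viewed as a $\CSP$ over the Boolean domain with the two affine ternary relations $\{(x,y,z) : x \oplus y \oplus z = 0\}$ and $\{(x,y,z): x \oplus y \oplus z = 1\}$---is not definable in $\infC$. This is the canonical witness to the limits of counting logic, established there by Cai--F\"urer--Immerman-style constructions producing pairs of structures that agree on all $\infC$-properties yet differ in solvability.

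Next I would connect this hard $\CSP$ to $3$-$\SAT$ through the algebraic theory of \cite{bjk05:joc}. Since $3$-$\SAT$ is $\NP$-complete, Schaefer's classification forces its template---over domain $\{0,1\}$ with the eight clause relations $R_{000},\dots,R_{111}$---to admit none of the tractable Boolean polymorphisms (the two constants, binary $\wedge$, binary $\vee$, majority, and the affine minority operation). A direct check confirms this: applying $\wedge$ or majority componentwise to suitable tuples of $R_{000}$ yields the forbidden all-zero tuple, constants and negation fail on $R_{000}$, and every polymorphism is forced to be idempotent, so the whole polymorphism clone collapses to the clone of projections. By the Galois connection between clones and co-clones underlying \cite{bjk05:joc}, the relations primitive-positive definable from $\{R_{000},\dots,R_{111}\}$ are then \emph{all} Boolean relations; in particular the two affine ternary relations above are pp-definable in the $3$-$\SAT$ language.

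A primitive-positive definition is precisely a gadget: each affine constraint is replaced by a conjunction of clauses over the original variables together with fresh existentially quantified ones, preserving satisfiability. As this replacement is uniform and local, it is realised by a first-order (hence $\FPC$) interpretation of $\tau_{3\SAT}$ in the vocabulary of the linear-equations $\CSP$, the auxiliary pp-variables being introduced as extra tuples of the universe. This gives a reduction from the linear-equations problem to $3$-$\SAT$. Combining it with the closure of $\infC$ under $\FPC$-reductions completes the argument: were $3$-$\SAT$ definable in $\infC$, composing with the reduction would define the linear-equations problem in $\infC$, contradicting \cite{abd09:tcs}.

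The main obstacle I anticipate is not the algebra but the bookkeeping at the interface. One must pin down the exact form of the undefinability statement in \cite{abd09:tcs}---which relational signature, which modulus, and how many variables per equation---so that it applies verbatim, and then verify that the pp-based gadget reduction respects that signature and is genuinely definable as an $\FPC$-interpretation rather than merely polynomial time. Everything else, namely the identification of the projection clone for $3$-$\SAT$ and the invocation of the Galois connection, is routine given \cite{bjk05:joc}.
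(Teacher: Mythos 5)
Your argument is correct, but it unwinds the citation that the paper leaves as a black box. The paper's own proof is two lines: by \cite{bjk05:joc} the polymorphism algebra of the $3$-$\SAT$ template is essentially unary (equivalently, the clone of projections), and the general theorem of \cite{abd09:tcs} then says that any such core CSP is not definable in \infC. You instead invoke only the \emph{base case} of \cite{abd09:tcs} --- undefinability of affine systems over $\mathbb{Z}_2$ in \infC --- and build the reduction to $3$-$\SAT$ yourself via the Galois connection: projections as polymorphisms imply every Boolean relation, in particular the two ternary affine relations, is pp-definable from $\{R_{000},\dots,R_{111}\}$, and a pp-definition is a local gadget replacement that is plainly an \FO- (hence \FPC-) interpretation, so closure of \infC under \FPC-reductions finishes the job. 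Both routes are sound; yours is more self-contained and elementary (it needs only the concrete hard problem plus reduction-closure, both of which the paper already uses elsewhere, and in fact for these particular relations no auxiliary existential variables are even needed, since $x\oplus y\oplus z=0$ is itself the conjunction of the four clauses $R_{001},R_{010},R_{100},R_{111}$ on the same variables, making the interpretation quantifier-free), while the paper's is shorter by delegating the entire reduction machinery to the general algebraic theorem. Your own caveat is the right one: the only real work left is matching the exact signature of the hard affine CSP in \cite{abd09:tcs} to your gadget; everything else is routine. One cosmetic slip: constant $1$ does not fail on $R_{000}$ (it maps everything to $(1,1,1)\in R_{000}$); it fails on $R_{111}$ --- but the overall collapse to the projection clone stands.
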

	\begin{proof}
		Note that a $3$-$\SAT$ instance $\struct{I}=(V,R^{\struct{I}}_{000},\ldots,R^{\struct{I}}_{111})$
		can also be interpreted as an instance of $\CSP(\Gamma_{3\SAT})$ for 
		$\Gamma_{3SAT}=\{R_{000},\ldots,R_{111}\}$ and $R_{ijk}=\{0,1\}^3\backslash (i,j,k)$.
		Hence, we can apply results from the algebraic classification of CSPs to 
		determine the definability of $3$-$\SAT$. In this context, it has been
		shown in \cite{bjk05:joc} that the algebra of polymorphisms
		corresponding to $\Gamma_{3SAT}$ contains
		only essentially unary operations. It follows 
		from the result in \cite{abd09:tcs} that $3$-$\SAT$ is not definable in $\infC$.
	\end{proof}
	
	\begin{theorem}\label{thm:inexp}
		Let $\Gamma$ be a valued constraint language of finite size
		and let $\Gamma'$ be a core 
		of $\Gamma$. If $(\XOR)$ holds for $\Gamma'_c$, then $\VCSP(\Gamma)$
		is not expressible in $\infC$.
	\end{theorem}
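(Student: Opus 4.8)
The plan is to exhibit, under the hypothesis that $(\XOR)$ holds for $\Gamma'_c$, a single reduction $3$-$\SAT \leq_{\FPC} \VCSP(\Gamma)$; the undefinability of $\VCSP(\Gamma)$ in $\infC$ then follows at once. Indeed, Lemma~\ref{lem:3sat} states that $3$-$\SAT$ is not definable in $\infC$, and since $\infC$ is closed under $\FPC$-reductions, a reduction $3$-$\SAT \leq_{\FPC} \VCSP(\Gamma)$ forces $\VCSP(\Gamma)$ to be undefinable in $\infC$ as well: were it definable, $3$-$\SAT$ would inherit definability. So the whole task reduces to assembling the reductions proved in Sections~\ref{sec:reductions} and~\ref{s:inexp} into one chain terminating at $\VCSP(\Gamma)$.

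The backbone of the chain is
\[ 3\text{-}\SAT \;\leq_{\FPC}\; \MAXCUT \;\leq_{\FPC}\; \VCSP(\langle\Gamma'_c\rangle_\equiv), \]
where the first step is Lemma~\ref{lem:3sat_to_maxcut} and the second is Lemma~\ref{lem:maxcut_to_gamma} applied with $\Gamma'_c$ in the role of the language: by hypothesis $(\XOR)$ holds for $\Gamma'_c$, which is precisely what that lemma requires. It then remains to reduce $\VCSP(\langle\Gamma'_c\rangle_\equiv)$ down to $\VCSP(\Gamma)$ by peeling off, one closure operation at a time, the layers $\langle\cdot\rangle_\equiv$, $\langle\cdot\rangle$ and $(\cdot)_c$, and finally passing from the core $\Gamma'$ back to $\Gamma$, using Lemmas~\ref{lem:equiv_to_gamma}, \ref{lem:expower_to_gamma}, \ref{lem:gammac_to_gamma} and \ref{lem:gamma_to_core} respectively. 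Composing $\FPC$-reductions yields the desired $3$-$\SAT \leq_{\FPC} \VCSP(\Gamma)$.

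Two points require care. First, the languages $\langle\Gamma'_c\rangle$ and $\langle\Gamma'_c\rangle_\equiv$ are infinite in general, whereas Lemmas~\ref{lem:equiv_to_gamma} and~\ref{lem:expower_to_gamma} are stated for finite languages. I expect this to be harmless: the reduction of Lemma~\ref{lem:maxcut_to_gamma} builds instances using only a \emph{single} binary function $f \in \langle\Gamma'_c\rangle_\equiv$ (all other function relations being empty), so in fact $\MAXCUT \leq_{\FPC} \VCSP(\{f\})$ with $\{f\}$ finite. Writing $f \equiv g$ for some $g \in \langle\Gamma'_c\rangle$, I would apply Lemma~\ref{lem:equiv_to_gamma} to the finite pair $\{f\} \subseteq \{g\}_\equiv$ and then Lemma~\ref{lem:expower_to_gamma} to the finite pair $\{g\} \subseteq \langle\Gamma'_c\rangle$; here $\Gamma'_c$ is itself finite, since $\Gamma'$ and $D$ are finite and there are only finitely many ways to fix parameters of a given function. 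Thus every invocation of a finite-language lemma is legitimate.

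The second and main obstacle is that Lemma~\ref{lem:gamma_to_core} is stated in the direction $\VCSP(\Gamma) \leq_{\FO} \VCSP(\Gamma')$, whereas the chain needs the \emph{reverse} reduction $\VCSP(\Gamma') \leq_{\FO} \VCSP(\Gamma)$. I expect to close this by observing that the core equivalence of Lemma~\ref{lem:core_equiv} is genuinely symmetric. Given a $\VCSP(\Gamma')$-instance over the core domain $D' \subseteq D$, I would reinterpret it as a $\VCSP(\Gamma)$-instance over $D$ by replacing each $f' \in \Gamma'$ with a fixed choice of preimage $f \in \Gamma$ satisfying $f|_{D'} = f'$ (a finite table, hence $\FO$-definable). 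The resulting $\Gamma$-instance has the original instance as its restriction to $D'$, so Lemma~\ref{lem:core_equiv} gives that the two have equal optimal value: enlarging the solution space from $D'$ to $D$ cannot lower the optimum precisely because $\Gamma'$ is a core. Hence the decision versions agree under the same threshold and this relabelling is a valid $\FO$-reduction. With this reverse core reduction in hand, the composition goes through and the theorem follows.
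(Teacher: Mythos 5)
Your proposal is correct and follows essentially the same route as the paper's proof: compose $3$-$\SAT\leq_{\FPC}\MAXCUT\leq_{\FPC}\VCSP(\langle\Gamma'_c\rangle_\equiv)$ (Lemmas~\ref{lem:3sat_to_maxcut}, \ref{lem:3sat} and~\ref{lem:maxcut_to_gamma}) with the chain of closure-peeling reductions down to $\VCSP(\Gamma)$, and conclude by closure of $\infC$ under $\FPC$-reductions. The two points you flag are genuine details that the paper's one-line appeal to Lemmas~\ref{lem:expower_to_gamma}--\ref{lem:gammac_to_gamma} leaves implicit, and you resolve both correctly: passing to the finite sublanguages $\{f\}\subseteq\{g\}_\equiv$ and $\{g\}\subseteq\langle\Gamma'_c\rangle$ legitimises the use of the finite-language lemmas, and the reversal of Lemma~\ref{lem:gamma_to_core} is exactly the symmetric reading of Lemma~\ref{lem:core_equiv} (lift each $f'\in\Gamma'$ to a chosen preimage in $\Gamma$), which is a valid $\FO$-reduction.
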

	\begin{proof}
		Assume property $(\XOR)$ holds for $\Gamma'_c$. By Lemma 
		\ref{lem:maxcut_to_gamma}, $\MAXCUT$ $\FPC$-reduces to $\VCSP(\langle\Gamma'_c\rangle_{\equiv})$.
		Lemmas \ref{lem:expower_to_gamma} to \ref{lem:gammac_to_gamma} provide
		a chain of $\FPC$-reductions from $\VCSP(\langle\Gamma'_c\rangle_{\equiv})$
		to $\VCSP(\Gamma)$. Since $\infC$ is closed under
		$\FPC$-reductions, Lemmas \ref{lem:3sat_to_maxcut} and \ref{lem:3sat}
		together show that $\MAXCUT$ is not definable in $\infC$, and hence neither
		is $\VCSP(\Gamma)$.
	\end{proof}

\section{Constraint Languages of Infinite Size}\label{s:inf}

 In representing the problem $\VCSP(\Gamma)$ as a class of relational structures, we have chosen to fix a finite relational signature $\tau_{\Gamma}$ for each finite $\Gamma$.  An alternative, \emph{uniform} representaation would be to fix a single signature which allows for the representation of instances of $\VCSP(\Gamma)$ for arbitrary $\Gamma$ by coding the functions in $\Gamma$ explicitly in the instance.  In this section, , we give a description of how this can be done.  Our goal
	is to show that our results generalise to this case, and that
	the definability dichotomy still holds.

	Let $\Gamma$ now be a valued constraint language over some finite domain $D$.
	The challenge of fixing a relational signature for instances of 
	$\VCSP(\Gamma)$ is that different instances may use different sets of 
	functions of $\Gamma$ in their constraints, and hence, we cannot represent
	each function as a relation in the signature.  Instead, we make the 
	functions part of the universe, together with tuples over $D$ of different 
	arities as their input.  Let $I$ be an instance of $\VCSP(\Gamma)$ where the
	constraints use functions from a finite subset $\Gamma_I\subset\Gamma$, and
	let $m$ be the maximal arity of any function in $\Gamma_I$.
	We then represent $I$ as a structure $\struct{I}$ with the multi-sorted
	universe $\univ I=V\dunion C\dunion B\dunion F\dunion T$, where $V$
	is a set of variables, $C$ a set of constraints, $B$ a set of numbers on 
	which we have a linear order, $F$ a set of function symbols corresponding to
	functions in $\Gamma_I$, and $T$ is a set of tuples from 
	$D\cup D^2\cup\ldots\cup D^m$, over the signature
	$\tau_D=(<,R_{\mathrm{fun}},R_{\mathrm{scope}},W_N,W_D,\Def_N,\Def_D,\Enc)$. Here, the relations
	encode the following information.
	\begin{itemize}
	  \item $R_{\mathrm{fun}}\subseteq C\times F$: This relation matches functions and constraints,
		i.e.\ $(c,f)\in R_{\mathrm{fun}}$ denotes that $c=(\sigma,f,q)$ is a constraint
		of the instance for some scope $\sigma$ and weight $q$.
	  \item $R_{\mathrm{scope}}\subseteq C\times V\times B$: This relation fixes the scope
	  		of a constraint, i.e.\ $(c,v,\beta)\in R_{\mathrm{scope}}$ denotes that 
	  		$c=(\sigma,f,q)$ is a constraint for some function $f$ and weight $q$, where the $\beta$-th
			component of $\sigma$ is $v$.
      \item $W_N,W_D\subseteq C\times B$: This is analogous to the finite case. These
      		two relations together encode the rational weights of the constraints.
      \item $\Def_N,\Def_D\subseteq F\times T\times B$: These two relations together
      		fix the definition of some function symbol in $F$. That is,
      		$(f,t,\beta)\in \Def_D$ denotes that the $\beta$-th bit of the numerator of
      		the value of $f$ on input $t$ is $1$, and similarly for $\Def_D$ and
      		the denominator.
      \item $\Enc\subseteq T\times D\times B$: This relation fixes the encoding 
      		of tuples as elements in $T$, i.e.\ $(t,a,\beta)\in \Enc$ denotes that
      		the $\beta$-th component of the tuple $t$ is the element $a\in D$.
	\end{itemize}
	The above signature allows now for instances $I$, $I'$ with different sets of
	functions $\Gamma_I$ and $\Gamma_{I'}$ to be represented as structures of 
	the same vocabulary. Since the set of function symbols is part of the 
	universe, the relations $\Def_N, \Def_D$ are required to give concrete meaning
	to these function symbols.
	
We now say, for a (potentially infinite) valued constraint language $\Gamma$ that $\VCSP(\Gamma)$ is \emph{uniformly definable} in $\FPC$ if there is an $\FPC$-interpretation of $\tau_{\QQ}$ in $\tau_D$ which takes an instance $\struct I$ of $\VCSP(\Gamma)$ to the cost of its optimal solution.  Our inexpressibility result, Theorem~\ref{thm:inexp},  immediately carries over to this setting as it is easy to construct an $\FPC$ reduction from the $\tau_{\Gamma}$ representation of $\VCSP(\Gamma)$ to the $\tau_D$ representation.  
\begin{theorem}\label{thm:inf-inexp}
		Let $\Gamma$ be a valued constraint language and let $\Gamma'$ be a core 
		of $\Gamma$. If $(\XOR)$ holds for $\Gamma'_c$, then $\VCSP(\Gamma)$
		is not uniformly definable in $\infC$.
	\end{theorem}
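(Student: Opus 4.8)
The plan is to reduce a finite instance of the problem, to which Theorem~\ref{thm:inexp} already applies, to the uniform problem, and then invoke closure of $\infC$ under $\FPC$-reductions. The first step is to extract a finite subset $\Gamma_0 \subseteq \Gamma$ that still witnesses the hypotheses of Theorem~\ref{thm:inexp}. Since the domain $D'$ of the core $\Gamma'$ is finite, the fact that $\Gamma'$ is a core is witnessed, for each $a \in D'$, by a single instance $I_a$ of $\VCSP(\Gamma')$, and each such instance uses only finitely many functions. Likewise, the $(\XOR)$ property of $\Gamma'_c$ is witnessed by a finite instance expressing the binary function $f$ with $\argmin f = \{(a,b),(b,a)\}$; this instance uses finitely many functions of $\Gamma'_c$, each of which is a parameterization (in the sense of Definition~\ref{def:gamma_c}) of some function of $\Gamma'$. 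Let $\Gamma_0 \subseteq \Gamma$ be the finite set consisting of exactly those functions of $\Gamma$ whose restrictions to $D'$ occur among the (finitely many) functions of $\Gamma'$ used in any of these witnesses, and let $\Gamma_0'$ be the sub-language of $\Gamma_0$ obtained by restricting each of its functions to $D'$.

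I would then check that $\Gamma_0'$ is a core of $\Gamma_0$ for which $(\XOR)$ holds on $(\Gamma_0')_c$. That $\Gamma_0'$ is a core follows because the instances $I_a$ use only functions of $\Gamma_0'$ and, being literally the same functions over $D'$ whether viewed in $\Gamma'$ or in $\Gamma_0'$, force each $a \in D'$ in every optimal solution; hence $\Gamma_0'$ is a core over $D'$, and it is a sub-language of $\Gamma_0$ by construction. The $(\XOR)$ property transfers because the functions expressing $f$ lie in $(\Gamma_0')_c$ by the choice of $\Gamma_0$, so the same expressing instance witnesses $f \in \langle (\Gamma_0')_c \rangle$. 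Consequently Theorem~\ref{thm:inexp} applies to the finite language $\Gamma_0$ with core $\Gamma_0'$, and $\VCSP(\Gamma_0)$, in its $\tau_{\Gamma_0}$ representation, is not definable in $\infC$.

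It remains to give an $\FPC$-interpretation $\Theta$ of $\tau_D$ in $\tau_{\Gamma_0}$ that maps the $\tau_{\Gamma_0}$-representation of an instance $I$ to its $\tau_D$-representation, namely the identical instance. Writing $\Gamma_0 = \{f_1,\dots,f_k\}$ and letting $m$ be the maximal arity, the interpretation keeps the sorts $V$, $C$ and $B$ and introduces the new sorts $F$ and $T$ as initial segments of the ordered number domain of the fixed sizes $k$ and $|D| + |D|^2 + \dots + |D|^m$, using the assumption (as in the earlier reductions) that the universe is large enough to index them; the finitely many smaller instances are handled individually. The relations $\Def_N, \Def_D$ and $\Enc$ do not depend on $I$: they encode the fixed function tables of $f_1,\dots,f_k$ and the fixed list of tuples over $D$, and so are defined by formulas hard-coding these finite data. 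The relation $R_{\mathrm{fun}}$ is read off the $R_{f_i}$ by $R_{\mathrm{fun}}(c, f_i) \equiv \exists \bar y\, R_{f_i}(\bar y, c)$, the relation $R_{\mathrm{scope}}$ records the components of the scope $\bar y$, and $W_N, W_D$ carry over verbatim. Since $k$, $|D|$ and $m$ are constants, every component of $\Theta$ is expressible in $\FPC$ (indeed first-order logic with the available arithmetic suffices), so $\Theta$ witnesses $\VCSP(\Gamma_0) \le_{\FPC} \VCSP(\Gamma)$ in the uniform representation.

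Finally, since $\Theta$ maps each instance to itself, it preserves the optimal cost and, a fortiori, satisfiability against any threshold; thus an $\FPC$-interpretation defining the optimal cost of the uniform representation would compose with $\Theta$ to define the optimal cost in the $\tau_{\Gamma_0}$ representation, and likewise at the level of the decision problem. As $\infC$ is closed under $\FPC$-reductions and $\VCSP(\Gamma_0)$ is not definable in $\infC$, it follows that $\VCSP(\Gamma)$ is not uniformly definable in $\infC$. I expect the only delicate point to be the bookkeeping of the first two paragraphs, namely verifying that a single finite $\Gamma_0$ can simultaneously witness the core property and the $(\XOR)$ property so that Theorem~\ref{thm:inexp} genuinely applies; the interpretation $\Theta$ itself is routine precisely because $\Gamma_0$, $D$ and $m$ are all fixed and finite.
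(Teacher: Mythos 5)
Your proof is correct and takes essentially the same route as the paper, which disposes of this theorem with a one-line remark that Theorem~\ref{thm:inexp} carries over via an easy $\FPC$-reduction from the $\tau_{\Gamma_0}$ representation to the uniform $\tau_D$ representation; you usefully make explicit the extraction of a finite sublanguage $\Gamma_0$ witnessing the core and $(\XOR)$ hypotheses, which the paper leaves implicit. The only nitpick is that ``exactly those functions of $\Gamma$ whose restrictions to $D'$ occur among the witnesses'' could be an infinite set if many functions of $\Gamma$ share the same restriction, so one should instead pick a single preimage in $\Gamma$ for each witness function; this is immaterial to the argument.
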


For the positive direction, i.e.\ to show that $\VCSP(\Gamma)$ is \emph{uniformly definable}
in $\FPC$ in all other cases, we simply need to adapt the proof of Theorem \ref{thm:exp}
to fit the new representation.
\begin{theorem}\label{thm:inf-exp}
  		Let $\Gamma$ be a valued constraint language and let $\Gamma'$ be a core 
		of $\Gamma$. If $(\XOR)$ does not hold for $\Gamma'_c$, then $\VCSP(\Gamma)$
		is uniformly definable in $\infC$.
\end{theorem}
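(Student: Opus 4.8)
The plan is to adapt the $\FPC$-interpretation constructed in the proof of Theorem~\ref{thm:exp} so that it reads the functions of $\Gamma$ off the structure rather than from a fixed relational signature. Let $\Gamma'$ be the given core of $\Gamma$ over the fixed finite subdomain $D' \subseteq D$. First I would reduce to the core: by Lemma~\ref{lem:core_equiv}, the optimal cost of any instance $I$ of $\VCSP(\Gamma)$ equals that of the instance $I'$ obtained by restricting every function to $D'$. Since $(\XOR)$ does not hold for $\Gamma'_c$ and $\Gamma'$ is a core, Theorem~\ref{thm:dichotomy} then guarantees that the optimal value of $I'$ coincides with the optimal value of $\BLP(I')$. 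Hence it suffices to give an $\FPC$-interpretation of $\tau_{\LP}$ in $\tau_D$ that takes an instance $\struct I$ to a $\tau_{\LP}$-representation of $\BLP(I')$; composing this with the interpretation of Theorem~\ref{thm:lpdefinable} yields an $\FPC$-interpretation of $\tau_{\QQ}$ in $\tau_D$ defining the optimal value, which is a fortiori an $\infC$-interpretation. In fact this establishes uniform definability already in $\FPC$.

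The bulk of the interpretation mirrors the finite-case formulas, with the relations $R_f$ replaced by the uniform encoding. Where the proof of Theorem~\ref{thm:exp} used a finite disjunction $\bigvee_{f\in\Gamma}\exists\bar y\,R_f(\bar y,c)$, I would instead quantify existentially over the function sort $F$, using $R_{\mathrm{fun}}$ to identify the function $f$ of a constraint $c$ and $R_{\mathrm{scope}}$ to recover its scope and arity. Concretely, the variables $\lambda_{c,t}$ of $\BLP(I')$ are indexed by pairs $(c,t)$ with $c\in C$, $t\in T$ a valid input of the function $f$ of $c$, and every component of $t$ in $D'$; validity of $t$ as an input of $f$ is detected by $\exists\beta\,\Def_D(f,t,\beta)$, which holds precisely when $t$ has the arity of $f$, since the denominator of any value of $f$ is at least $1$. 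The variables $\mu_{x,a}$ are indexed by $(x,a)\in V\times D'$. The row index sets and the entries of the constraint matrix and right-hand side depend only on the incidence pattern and the constants $\{0,1,-1\}$, and are defined from $R_{\mathrm{scope}}$ and $\Enc$ exactly as before: a scope position $i$ of $c$ is a bit $\beta$ with $\exists v\,R_{\mathrm{scope}}(c,v,\beta)$, and the test $\nu_i=a$ becomes $\Enc(t,a,\beta)$. Restricting to the core subdomain $D'$ is merely a matter of hardcoding the finitely many membership tests $a\in D'$, since $D'\subseteq D$ is fixed and finite.

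The one genuinely new ingredient, and the step I expect to be the main obstacle, is the objective vector. The coefficient of $\lambda_{c,t}$ is the product $q\cdot f(t)$, where the weight $q$ is read from $W_N,W_D$ and the value $f(t)$ is read from $\Def_N,\Def_D$. In the finite case $f(t)$ was a fixed rational constant for each $f$, so only $q$ varied and the coefficient was immediately definable; in the uniform setting both factors are instance-dependent rationals presented by their bit-representations on the ordered bit sort. I therefore need a formula defining the product of two rationals given by such bit-representations, i.e.\ the product of their numerators and of their denominators. Because the bit sort is linearly ordered and integer multiplication is polynomial-time computable, this product is $\FPC$-definable, with its bits laid out on the ordered sort $B'=B^2$; as each factor uses at most $|B|$ bits, $|B|^2$ positions suffice to hold the result. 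The coefficient of each $\mu_{x,a}$ is $0$.

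With these modifications the interpretation defines $\BLP(I')$ as a $\tau_{\LP}$-structure from $\struct I$, and composing with Theorem~\ref{thm:lpdefinable} defines the optimal value of $\BLP(I')$, which by the reduction to the core equals the optimal cost of $I$. Thus $\VCSP(\Gamma)$ is uniformly definable in $\FPC$, hence in $\infC$. The remainder of the argument is routine: apart from the multiplication of the two instance-dependent rationals above, every formula is a direct translation of the corresponding finite-case formula from the relations $R_f$ to the relations $R_{\mathrm{fun}}$, $R_{\mathrm{scope}}$, $\Def_N$, $\Def_D$ and $\Enc$.
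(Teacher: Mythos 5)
Your proposal is correct and follows essentially the same route as the paper: adapt the interpretation of Theorem~\ref{thm:exp} so that functions, scopes and arities are read off the sorts $F$ and $T$ via $R_{\mathrm{fun}}$, $R_{\mathrm{scope}}$, $\Enc$, $\Def_N$, $\Def_D$, and compose with Theorem~\ref{thm:lpdefinable}. You are in fact more explicit than the paper on two points it glosses over---the initial restriction to the core instance before invoking Theorem~\ref{thm:dichotomy}, and the $\FPC$-definable multiplication of the two instance-dependent rationals $q$ and $f(t)$ in the objective---while your arity test $\exists\beta\,\Def_D(f,t,\beta)$ is a harmless variant of the paper's explicit $\mathrm{Ar}$ formulas.
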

\begin{proof}
	We adapt the proof of Theorem \ref{thm:exp} for potentially infinite
	 languages $\Gamma$. The main challenge is to work around the variable
	arities of the constraints.
	
	Let $\Gamma$ be a constraint language over
	some finite domain $D$, and let $I$ be an instance of $\VCSP(\Gamma)$,
	and $\struct{I}$ its relational representation in $\tau_D$.
	Recall that the set of variables of $\BLP(I)$ for 
	$I=(V,C)$ is given by the union of the two sets
	\[\lambda=\{\lambda_{c,\nu}|c=(\sigma,f)\in C, \nu\in D^{|\sigma|}\}\]
	and
	\[\mu=\{\mu_{x,a}|x\in V, a\in D\}.\]
	These sets can now be $\FPC$ defined from $\struct{I}$ as follows.
	\[\lambda(c, s)=\exists f\in F:R_{\mathrm{fun}}(c,f) \wedge \exists \beta\in B:
		\mathrm{Ar}_{R}(c,\beta)=\mathrm{Ar}_{\Enc}(s,\beta)\]
	and
	\[\mu(x,a)=x\in V \wedge a\in T \wedge \mathrm{Ar}_{\Enc}(a,1).\]
	Here, we make use of a formula $\mathrm{Ar}(x,\beta)$ by which we mean that the tuple
	encoded by the element $x$ has the arity $\beta$. The formula can be defined 
	as follows.
	\[\mathrm{Ar}_{R}(x,\beta)=\exists v\in V: \left(R_{\mathrm{scope}}(x,v,\beta)
		\wedge \forall u\in V, \beta'\in B: \beta'\geq \beta\Rightarrow \neg R_{\mathrm{scope}}(x,u,\beta')\right),\]
	\[\mathrm{Ar}_{\Enc}(x,\beta)=\bigvee_{a\in D} \left(\Enc(x,a,\beta)
		\bigwedge_{a'\in D} \forall \beta'\in B: \beta'\geq \beta\Rightarrow \neg \Enc(x,a',\beta')\right).\]
	In words, the formulas ensure that $x$ is a tuple element that is used in the
	structure, that its $\beta$-th component is non-empty, and that for any position 
	$\beta'\geq \beta$, the $\beta'$-th component of $x$ is not defined in the structure.
	
	The rest of the proof follows closely to the original one in Theorem \ref{thm:exp}
	without substantial changes. 
\end{proof}


\end{document}